\newcommand{\boxplot}[7]{%
	\filldraw[fill=white,line width=0.2mm,draw=#7] let \n{boxxl}={#1-0.15}, \n{boxxr}={#1+0.15} in (axis cs:\n{boxxl},#3) rectangle (axis cs:\n{boxxr},#4);
	\draw[line width=0.2mm, color=red, very thick] let \n{boxxl}={#1-0.15}, \n{boxxr}={#1+0.15} in (axis cs:\n{boxxl},#2) -- (axis cs:\n{boxxr},#2);
	\draw[line width=0.2mm,color=#7] (axis cs:#1,#4) -- (axis cs:#1,#6);                                                                           
	\draw[line width=0.2mm,color=#7] let \n{whiskerl}={#1-0.025}, \n{whiskerr}={#1+0.025} in (axis cs:\n{whiskerl},#6) -- (axis cs:\n{whiskerr},#6);
	\draw[line width=0.2mm,color=#7] (axis cs:#1,#3) -- (axis cs:#1,#5);                                                                           
	\draw[line width=0.2mm,color=#7] let \n{whiskerl}={#1-0.025}, \n{whiskerr}={#1+0.025} in (axis cs:\n{whiskerl},#5) -- (axis cs:\n{whiskerr},#5);
}
\newcommand{\outlier}[3]{
	\draw[line width=0.2mm,color=#3] (axis cs:#1,#2) circle (1.5pt);
}
\crefname{chapter}{Chapter}{Chapters}
\crefname{section}{Section}{Sections}
\crefname{appendix}{Appendix}{Appendices}
\crefname{table}{Table}{Tables}
\crefname{figure}{Figure}{Figures}
\crefname{algorithm}{Algorithm}{Algorithms}
\crefname{theorem}{Theorem}{Theorems}
\crefname{lemma}{Lemma}{Lemmas}
\crefname{corollary}{Corollary}{Corollaries}
\crefname{remark}{Remark}{Remarks}
\def\clap#1{\hbox to 0pt{\hss#1\hss}}
 \def\mathrlap{\mathpalette\mathrlapinternal} 
\def\mathrlapinternal#1#2{\rlap{$\mathsurround=0pt#1{#2}$}}
\newcommand{\OPT}{\operatorname{OPT}}
\newcommand{\LOP}{\textup{LMP}}
\newcommand{\flex}{\textup{flex}}
\newcommand{\fixed}{\textup{fix}}
\newcommand{\adap}{\textup{A}}
\newcommand{\nAdap}{\textup{N}}
\newcommand{\LOPfa}{\LOP^{\flex}_{\adap}}
\newcommand{\LOPfn}{\LOP^{\flex}_{\nAdap}}
\newcommand{\LOPxa}{\LOP^{\fixed}_{\adap}}
\newcommand{\LOPxn}{\LOP^{\fixed}_{\nAdap}}
\newcommand{\LPflex}{\textup{[LP]}}
\newcommand{\OPTfa}{\opt^{\flex}_{\adap}}
\newcommand{\OPTfn}{\opt^{\flex}_{\nAdap}}
\newcommand{\OPTxa}{\opt^{\fixed}_{\adap}}
\newcommand{\OPTxn}{\opt^{\fixed}_{\nAdap}}
\newcommand{\fmpN}{$\textup{FMP}_{\nAdap}$}
\newcommand{\fmpA}{$\textup{FMP}_{\adap}$}
\newcommand{\costN}{f_{\nAdap}}
\newcommand{\costA}{f_{\adap}}
\newcommand{\opt}{\textup{OPT}}
\newcommand{\optN}{\opt_{\nAdap}}
\newcommand{\optA}{\opt_{\adap}}
\newcommand{\SP}[2]{\operatorname{SP}\!\!_{#1}(#2)}
\newcommand{\procSP}{\textsc{findPaths}}
\pgfplotsset{filter discard warning=false}
\def\def\IPEfile{#}\input{#}1{\def\IPEfile{#1}\input{#1}}
\algrenewcommand{\algorithmicforall}{\textbf{for each}}
\def\ForEach{\ForAll}
\newcommand{\paths}{\mathcal{P}}
\DeclareMathOperator*{\argmin}{arg\,min}
\newcommand{\rom}[1]{\uppercase\expandafter{\romannumeral #1\relax}}
\definecolor{darkbrown}{RGB}{102, 51, 0}
\def\therule{\makebox[\algorithmicindent][l]{\hspace*{.5em}\vrule height .75\baselineskip depth .25\baselineskip}}%
\newtoks\therules% Contains rules
\def\appendto#1#2{\expandafter#1\expandafter{\the#1#2}}% Append to token list
\def\gobblefirst#1{% Remove (first) from token list
  #1\expandafter\expandafter\expandafter{\expandafter\@gobble\the#1}}%
\def\LState{\State\unskip\the\therules}% New line-state
\def\pushindent{\appendto\therules\therule}%
\def\popindent{\gobblefirst\therules}%
\def\printindent{\unskip\the\therules}%
\def\printandpush{\printindent\pushindent}%
\def\popandprint{\popindent\printindent}%
\title{Fare Evasion in Transit Networks}
\titlerunning{Fare Evasion in Transit Networks}
\newcommand{\tuaddress}{%
Institut f{\"u}r Mathematik, Technische Universit\"{a}t Berlin
}
\newcommand{\maastricht}{%
School of Business and Economics, Maastricht University
}
\newcommand{\chile}{%
Departamento de Ingenier\'ia Industrial, Universidad de Chile
}
\institute{\chile \and \maastricht \and \tuaddress}
\author{Jos\'e R.~Correa\inst{1} \and Tobias Harks\inst{2} \and Vincent\,J.C.\,Kreuzen\inst{2} \and Jannik
  Matuschke\inst{3}
}
\begin{document}

\maketitle
\pagenumbering{arabic}
\begin{abstract}
Public transit systems in urban areas usually require large state subsidies, primarily due to high fare evasion rates. In this paper, we study new models for optimizing fare inspection strategies in transit networks based on bilevel programming. In the first level, the leader (the network operator) determines probabilities for inspecting passengers at different locations, while in the second level, the followers (the fare-evading passengers) respond by optimizing their routes given the inspection probabilities and travel times.
To model the followers' behavior we study both a non-adaptive variant, in which passengers select a path a priori and continue along it throughout their journey, and an adaptive variant, in which they gain information along the way and use it to update their route.  For these problems, which are interesting in their own right, we design exact and approximation algorithms and we prove a tight bound of 3/4 on the ratio of the optimal cost between adaptive and non-adaptive strategies.
For the leader's optimization problem, we study a fixed-fare and a flexible-fare variant, where ticket prices may or may not be set at the operator's will. For the latter variant, we design an LP based approximation algorithm. Finally, using a local search procedure that shifts inspection probabilities within an initially determined support set, we perform an extensive computational study for all variants of the problem on instances of the Dutch railway and the Amsterdam subway network. This study reveals that our solutions are within $95\%$ of theoretical upper bounds drawn from the LP relaxation.
\end{abstract}

\section{Introduction\label{section:introduction}}

Fare evasion in transit systems causes significant losses to society. Recent studies revealed an annual loss of 70 million pounds for London's transit system~\cite{London10} and fare evasion rates of nearly $20\%$ in Santiago's transit system, which is subsidised with over 700 million dollars each year~\cite{transantiago}. As the installation of physical ticket barriers is not always possible, cost-efficient, or desirable, many transit systems, such as the Dutch and German railway and subway networks and the London bus system, rely on the honesty of customers and proper controlling of valid tickets \emph{on board}. Travelers who are caught without a ticket have to pay a fine, which is significantly larger than the ticket price.

In this paper, we study the optimization of fare inspection strategies in transit systems taking into account realistic models for the passenger's reaction. Our models are based on a \emph{bilevel optimization problem} (or Stackelberg game) on the network. In the first level, the leader, who strives to maximize the revenue from ticket sales and collected fines, determines for each edge the probability of controlling passing passengers, representing the frequency of inspections on that edge. In light of the limited number of inspectors available, we assume a global budget constraint on the sum of these probabilities. Given the inspection probabilities on edges, we assume passengers to act strategically and decide on whether or not to buy a ticket and along which path to travel based solely on their perceived cost. If a passenger chooses not to buy a ticket, her cost is thus the sum of the path length (expressed in monetary units) and the expected fine to be payed (which depends on the inspection probabilities along the path); on the other hand, if she buys a ticket, her cost is defined as the cost of the cheapest path considering both ticket price and distance measured in monetary units.

We perform a complete study differentiating between two different possibilities for the passengers' behavioral assumption as well as two different settings for the leader's decision problem. From the followers side, we consider both the case in which they are \emph{adaptive} and a \emph{non-adaptive}. In the latter, each passenger chooses a path at the beginning of her journey and continues along it, independent of whether or not she encounters an inspector. In the adaptive version, passengers adapt their behavior and consequently, when caught without a ticket, they continue on their shortest path only considering distance (since the fine includes a ticket to finish the trip). For the leader, we also study two different settings. In the \emph{fixed-fare} setting, we assume that ticket prices are fixed a priori, e.g., by governmental regulations, and the leader only sets the inspection probabilities subject to a budget constraint. In the \emph{flexible-fare} setting, we assume that the leader can additionally determine ticket prices.

\subsubsection{Our Contribution and Structure of the Paper.}

The main goal of this work is to provide a comprehensive study of fare evasion and fare inspection problems in transit networks. As mentioned above, we consider four versions of the problem by variating on whether the followers are adaptive or not, and wether the ticket prices are fixed or flexible. From a methodological viewpoint we tackle the problem from different angles, designing polynomial time algorithms and approximation schemes for the followers' problems, studying approximation algorithms and LP relaxations for the leader's problem, and using a local search heuristic to obtain high quality solutions for real-world instances.

After establishing the precise model in \cref{section:preliminaries}, we study the two variants of the followers' minimization problem in \cref{sec:followers}. These are natural extensions of the classic shortest path problem and exhibit interesting properties on their own which go beyond the particular application in this work. For the non-adaptive variant, we design a fully polynomial time approximation scheme
for general network topologies, exploiting similarities between this problem and the restricted shortest path problem (RSP). Interestingly we also obtain an exact polynomial time algorithm for series-parallel graphs, in sharp contrast to RSP, which is still $N\!P$-hard in these graphs. For adaptive followers, we obtain an exact polynomial time algorithm for general graphs using an optimal substructure property of the problem. Surprisingly, we further show that for arbitrary probability distributions, the optimal solution to the non-adaptive followers' minimization problem is at most a factor $4/3$ away from the optimal solution to the adaptive variant, and we also show that this bound is tight.

In \cref{sec:leader}, we turn our attention to the leader's maximization problem and prove that all four variants of this problem are strongly NP-hard. We then present an LP relaxation that yields a valid upper bound on the achievable profit for all four variants, and also obtain a $(1-1/e)$-approximation algorithm for the variant involving flexible ticket prices and non-adaptive followers. Combining this with the worst-case gap from the preceding section yields a  $\frac{3}{4}(1-1/e)$-approximation for the variant with adaptive followers. Finally, we develop a local search procedure that shifts inspection probabilities within an initially determined support set of edges. As candidate support sets, we use solutions from an LP relaxation, a minimum multicut, and a related mixed integer program~\cite{borndorfer}. 

In \cref{section:heuristics}, we demonstrate the applicability of our local search approach for
all four problem variants by conducting an extensive computational study comprising a total of $5600$ instances based on the networks of the Dutch railway system, the Amsterdam metro system, and randomly generated graphs. Our study reveals that the objective values of the computed solutions are within $95\%$ of the calculated upper bound on average.

\subsubsection{Related Work.} 
Bilevel network pricing problems have been considered extensively in the literature; see~\cite{BrotcorneLMS01,BrotcorneLMS08,Marcotte98}. In these works, 
the leader prices edges of some subgraph and the followers choose shortest paths with respect to edge costs defined as the sum of travel costs and prices.
This problem has been proved to be NP-hard; see~\cite{Marcotte98} and \cite{Bouhtou07,BriestHK12,RochSM05} for further hardness and approximability results. 

Bornd\"orfer et al.~\cite{borndorfer,borndorferAPRIL} investigated a problem similar to our variant of nonadaptive followers and fixed ticket prices. In contrast to our model, they assume the cost of the followers to be given by the \emph{sum} of travel costs and probabilities, reducing the followers minimization problem to a standard shortest path problem. Based on this simplification, the authors derive a compact mixed integer program (MIP) for the leader's optimization problem and solve it on several test instances derived from the German motorway network. In addition, they also study the case where the leader
wants to minimize the number of fare evaders. In contrast, we model the followers' response more precisely by considering the exact expected fine along a path, leading to a nonlinear aggregation of probabilities. In our computational study, we systematically compare the difference between both modeling approaches. We show that the quality of plain MIP solutions (as in~\cite{borndorfer,borndorferAPRIL}) is on average over all instances $5\%$ lower than our solutions when taking the exact followers response into account. For instances derived from the Amsterdam metro network, our solution improves even by $7.4\%$ on average, with an improvement of more than $20\%$ on some instances.

There is a body of literature on so-called network security games, where a defender can check a limited number of edges in a graph and the attacker computes from a given set of source and target nodes a path connecting a source and a target; see~\cite{conitzer,tambe,washburn}. The payoff for the attacker (and defender) has a binary characteristic since it only depends on whether or not the attacker has been caught. In Lin et al.~\cite{lin}, a single defender moves through the network \emph{over time} to detect attackers. The authors propose a linear program which solves the problem to optimality for small instances, and heuristics which solve the problem for larger instances within 1\% of optimality.

Yin et al.~\cite{yin} study a fare inspection model similar to ours. In contrast to our work, passengers are assumed to follow a fixed route through the network. In~\cite{BorndorferSS12}, the problem of computing tours of inspectors for a given distribution of traffic flows is considered. The authors derive an integer programming formulation and conduct a computational study on the German motorway network.

\section{The Model\label{section:preliminaries}}
We we are given a directed graph $G = (V, E)$
with costs $c : E \rightarrow \mathbb{Z}_{+}$ modeling the transit times on the edges (as monetary cost incurred to the passengers traveling along them). In order to prevent fare evasion, the network operator (the \emph{leader}) sets inspection probabilities $p_e \in [0,1]$ on the edges subject to the budget constraint $\sum_{e\in E} p_e\leq B$, where $B \geq 0$ corresponds to a limited number of ticket inspectors. We first describe the passengers' reaction to the chosen inspection strategy and then discuss the resulting revenue for the operator.

\subsubsection{The followers.}
Passengers (the \emph{followers}) are modeled by a set of \emph{commodities} $K$. For every commodity $i \in K$, a demand $d_i \geq 0$ is given that specifies the number of passengers with origin $s_i \in V$ and destination $t_i \in V$. 
Passengers of commodity $i$ can either buy a ticket at price $T_i \geq 0$ or choose a path without paying the ticket. If a passenger is caught without a valid ticket he has to pay a fine $F \geq 0$ satisfying $F \geq T_i$. As in most public transport systems (e.g., the Dutch or German sub- and railway networks) the fine includes the ticket price and enables the passenger to continue his trip.
Passengers are assumed to be rational, deciding purely based on their personal costs, which is expressed as the sum of travel times along the chosen path and the monetary cost. In the following, we differentiate between a \emph{non-adaptive} and an \emph{adaptive} variant of the followers response. In both settings, passengers of commodity $i$ can either chose to pay the ticket and follow the shortest path w.r.t.~$c$, or decide to evade the fare, choosing a path $P \in \mathcal{P}_i$ that minimizes a variant-dependent cost function (where $\mathcal{P}_i$ denotes the set of $s_i$-$t_i$-paths).

\paragraph{Non-adaptive followers.} 
In the non-adaptive variant, passengers are assumed to choose a route before the start of their trip and continue along it independent of whether or not they encounter an inspector.
For a path $P$, we denote its total travel time by $c(P) := \sum_{e \in P} c_e$, and we denote the probability that no inspector is encountered along $P$ by $\pi(P) := \prod_{e \in P} (1 - p_e)$. Given probabilities~$p$, the expected cost of path $P$ for passenger $i$ in the non-adaptive setting is
\[f_{\nAdap,p,i} (P) := c(P) + \left(1 - \pi(P) \right) \cdot F.\]
This variant is plausible under the assumption that passengers have determined their route beforehand (via checking the route and timetables). We denote the corresponding optimization problem by \fmpN.

\paragraph{Adaptive followers.}
In the adaptive variant, once  a passenger is caught, he will continue his trip
along the shortest path (w.r.t.~$c$). Letting $\SP{c}{v, w}$ denote the shortest path distance from $v$ to $w$ w.r.t.~$c$, the follower thus tries to find a path $P = (e_1, \dots, e_k) \in \mathcal{P}_i$ with $e_i = (v_i, v_{i+1})$ minimizing the expected cost
\[f_{\adap, p, i}(P) := \sum_{i = 1}^{k} \prod_{j = 1}^{i-1} \big(1 - p_{e_j}\big) \cdot \Big(c_{e_i} +  p_{e_i} \big(F + \SP{c}{v_{i+1}, t}\big)\Big).\]
Note that in this formula, the $i$th summand corresponds to the event of $v_i$ being reached without inspection, in which case $e_i$ is traversed next.
This variant assumes that passengers know the shortest paths from all stations to their destination in a network, e.g., by calculating it using a portable computing device. We denote the corresponding optimization problem by \fmpA.

\subsubsection{The leader.}
The leader's problem can be defined as a bilevel problem, where the leader sets probabilities on the edges to which the followers respond by solving their individual optimization problems. While we will always assume the fine $F$ to be fixed in the problem input (fines are commonly determined by the legislation~\cite{nloverheid}), we will consider both the scenario where ticket prices are \emph{fixed} a priori and the scenario where they are \emph{flexible} and can be determined by the leader as part of his optimization problem. 
Combining this with the two different models for the followers' reaction, we obtain four different versions of the \emph{leader's maximization problem}, which we will denote by $\LOP^{L}_{X}$, with $L \in \{\fixed, \flex\}$ and $X \in \{\textup{N}, \textup{A}\}$ specifying the ticket pricing variant and the behaviour of the followers, respectively. Accordingly, we denote by $\Gamma^{L}_{X, i}(p)$ the leader's revenue 
per passenger
received from commodity $i$ when choosing inspection probabilities $p$. The leader thus wants to maximize $\sum_{i \in K} d_i \Gamma^{L}_{X, i}(p)$.

\paragraph{Fixed Fares.}
In the fixed-fare setting, the revenue received from passenger $i$ is either the ticket price or the expected revenue from collecting fines, i.e.,
\[\Gamma^{\fixed}_{X, i}(p) := \max \big\{F\cdot \left(1- \pi(P) \right) \; : \; P^* \in \argmin_{P\in\paths_i \cup \{P_i^T\}}f_{X, p, i}(P) \big\},\]
where $P_i^T$ denotes a special path representing the option of paying the ticket with $f_{X, p, i}(P^T_i) := \SP{c}{s_i, t_i} + T_i$ and $\pi(P^T_i) := 1 - T_i/F$.

\paragraph{Flexible Fares.}
When the leader is allowed to determine the ticket prices, he can ensure that every passenger chooses to buy a ticket by setting the prices sufficiently low. The maximum value a passenger is willing to pay for a ticket is
\[\Gamma^{\flex}_{X, i}(p) := \min \{ f_{X,p,i}(P) - \SP{c}{s_i, t_i} \; : \; P \in \mathcal{P}_i\} ,\]
because for this ticket price, the cost of paying the ticket and traveling along the shortest path is at most the expected cost of his best fare evasion option.
            
\section{The Followers' Minimization Problem}\label{sec:followers}
In this section, we design efficient algorithms of both variants of the followers' minimization problem. Throughout this section, we assume we are given the graph $G = (V, E)$, the start $s \in V$ and destination $t \in V$ of a given follower, costs $c$, probabilities $p$, and the fine $F$.
		
	\subsubsection{Non-adaptive Followers' Minimization Problem.}
		
	The non-adaptive version of the followers' minimization problem is related to the \emph{restricted shortest path problem} (RSP): If $P^*$ is an $s$-$t$-path minimizing $\costN$, then $P^*$ also maximizes $\pi(P^*)$ among all paths with $c(P) \leq C^* := c(P^*)$. By discretizing the set of possible values for $C^*$, this relation can be used to derive an FPTAS for {\fmpN}; see \cref{app:fptas} for a detailed proof.
	
	\begin{restatable}{theorem}{restateThmFPTAS}\label{thm:fptas}
		There is an algorithm for {\fmpN} that computes a $(1 + \varepsilon)$-approximate solution in time polynomial in $\frac{1}{\varepsilon}$, $\log c_{\max}$, $|V|$, and $|E|$.
	\end{restatable}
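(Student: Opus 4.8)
The plan is to exploit the stated reduction to the restricted shortest path problem (RSP). Since an optimal path $P^*$ maximizes $\pi(P)$ among all paths with $c(P)\le C^*:=c(P^*)$, it suffices to solve, for a suitable guess $C$ of the optimal cost level, the single-budget subproblem $\max\{\pi(P)\,:\,c(P)\le C\}$ and then return the best path found over all guesses. Taking logarithms turns ``maximize $\pi(P)=\prod_{e\in P}(1-p_e)$'' into ``minimize $\sum_{e\in P}-\ln(1-p_e)$'', so each subproblem is exactly an RSP instance with integer budget resource $c$ and a nonnegative reliability objective, which I would solve approximately by the standard cost-scaling dynamic program. For the dynamic program itself I keep track of $\pi$ directly rather than of $-\ln\pi$, so that edges with $p_e=1$ simply multiply by $0$ and cause no trouble.

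Concretely, first I would let the guess $C$ range over a geometric grid $\{(1+\varepsilon)^j\}$ between the shortest-path cost and $|V|\cdot c_{\max}$, giving $O(\tfrac1\varepsilon\log(|V|c_{\max}))$ candidates (the degenerate case $c(P^*)=0$ is handled separately by a direct max-$\pi$ computation on the zero-cost subgraph). For each $C$ I replace $c_e$ by the scaled integer cost $\bar c_e:=\lfloor c_e|V|/(\varepsilon C)\rfloor$, so that every path with $c(P)\le C$ satisfies $\bar c(P)\le |V|/\varepsilon$, and I run a dynamic program computing, for each node $w$ and each scaled budget $k\in\{0,\dots,\lceil|V|/\varepsilon\rceil\}$, the maximum value of $\pi$ over $s$-$w$-paths with $\bar c$-cost at most $k$. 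Because appending an edge multiplies $\pi$ by $(1-p_e)\le 1$ and never decreases cost, the relevant optima are attained by simple paths and the table can be filled by relaxation in increasing order of $k$; edges with $\bar c_e=0$ stay within one budget level and are resolved there by a shortest-path computation in the nonnegative weights $-\ln(1-p_e)$.

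For the grid point $C$ with $C^*\le C\le(1+\varepsilon)C^*$, the optimum $P^*$ satisfies $\bar c(P^*)\le|V|/\varepsilon$, so the dynamic program returns a path $\hat P$ with $\pi(\hat P)\ge\pi(P^*)$, while the rounding of costs guarantees $c(\hat P)\le(1+\varepsilon)C\le(1+\varepsilon)^2C^*$. Using that $\costN(P)=c(P)+(1-\pi(P))F$ is increasing in $c(P)$ and decreasing in $\pi(P)$, this yields $\costN(\hat P)\le(1+\varepsilon)^2c(P^*)+(1-\pi(P^*))F\le(1+\varepsilon)^2\,\costN(P^*)$, so after rescaling $\varepsilon$ the best path over all guesses is a $(1+\varepsilon)$-approximation. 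The running time is the number of guesses times the size of one dynamic program, namely $O(\tfrac1\varepsilon\log(|V|c_{\max}))\cdot O(|E|\cdot|V|/\varepsilon)$, which is polynomial in $\tfrac1\varepsilon$, $\log c_{\max}$, $|V|$, and $|E|$.

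The step I expect to be most delicate is the error bookkeeping: the natural guarantee of the scaling dynamic program is stated separately for the two resources $c$ and $\pi$, whereas I need a single multiplicative bound on the combined, nonlinear objective $\costN$. The clean way around this is to relax only the cost budget (by a $(1+\varepsilon)$ factor) while never letting the reliability $\pi$ drop below the optimal $\pi(P^*)$; the monotonicity of $\costN$ in its two arguments then makes the two errors compose into a single $(1+\varepsilon)^2$ factor. The remaining technical points are comparatively routine: the in-layer propagation along zero-scaled-cost edges, the boundary case $c(P^*)=0$, and the bookkeeping that a simple path has at most $|V|$ edges so that $c(\hat P)<(\bar c(\hat P)+|V|)\,\varepsilon C/|V|\le(1+\varepsilon)C$.
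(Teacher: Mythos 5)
Your proposal is correct and follows essentially the same route as the paper's proof: a geometric grid of $O(\tfrac1\varepsilon\log(|V|c_{\max}))$ guesses for the optimal cost level, a restricted-shortest-path subroutine that relaxes the cost budget by a factor $(1+\varepsilon)$ while never letting $\pi$ fall below the optimal value, and monotonicity of $\costN$ in its two arguments to compose the errors into a $(1+\varepsilon)^2$ factor that is removed by rescaling $\varepsilon$. The only difference is presentational: the paper invokes Hassin's RSP algorithm as a black box to produce each path $P_i$ with $c(P_i)\leq(1+\varepsilon)C_i$ and $\pi(P_i)\geq\pi(P)$ for all $P$ with $c(P)\leq C_i$, whereas you spell out the underlying cost-scaling dynamic program (and additionally treat the degenerate zero-cost case explicitly).
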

	
	It is well-known that RSP is $N\!P$-hard even on a very restricted subclass of series-parallel graphs. Given the similarity of the two problems, it seems natural to expect this hardness result to carry over to {\fmpN}. Surprisingly, it turns out that {\fmpN} can be solved exactly in polynomial time on series-parallel graphs.

	\begin{theorem}\label{thm:sepa}
	  There is an algorithm that computes an optimal solution to instances of {\fmpN} with series-parallel graphs in polynomial time.
	\end{theorem}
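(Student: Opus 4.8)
The plan is to run a dynamic program over the series-parallel decomposition tree of $G$, after a change of variables that exposes a concavity property which keeps the data we track polynomially bounded. First I would reformulate the objective: minimizing $\costN(P) = c(P) + (1 - \pi(P))\cdot F$ over $s$-$t$-paths is equivalent to minimizing
\[
\phi\big(c(P), u(P)\big) := c(P) + F - F\,e^{-u(P)}, \qquad u(P) := -\ln \pi(P) = \textstyle\sum_{e \in P} -\ln(1 - p_e),
\]
where I pass to the coordinate $u(P)\ge 0$, which, unlike $\pi(P)$, is additive over the edges of $P$. The point of this substitution is twofold: in $(c,u)$-space both path statistics are additive, and $\phi$ is jointly concave (linear in $c$ and concave in $u$, since $\partial^2_u \phi = -F e^{-u}\le 0$) as well as nondecreasing in each coordinate.

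Next I would record the structural consequence that makes the approach efficient. For a subgraph $H$ of the decomposition with terminals $s_H, t_H$, let $A_H \subseteq \mathbb{R}^2$ be the set of pairs $(c(P), u(P))$ as $P$ ranges over all $s_H$-$t_H$-paths in $H$. Because $\phi$ is concave and nondecreasing, its minimum over $A_H$ is attained at a vertex of the lower-left (south-west) boundary of $\mathrm{conv}(A_H)$: every dominated point has larger $\phi$-value, and every non-vertex achievable point is a convex combination of hull vertices, so by concavity cannot beat the best of those vertices. Hence it suffices, at each node of the tree, to maintain only the south-west hull of $A_H$ — a convex chain whose vertices are genuine path statistics — instead of the full Pareto frontier.

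I would then run the program bottom-up. A leaf edge $e$ contributes the single point $(c_e, -\ln(1-p_e))$. At a series node $H = H_1 \cdot H_2$ a path decomposes as $P_1P_2$ with $c(P)=c(P_1)+c(P_2)$ and $u(P)=u(P_1)+u(P_2)$, so $A_H = A_{H_1}\oplus A_{H_2}$ is a Minkowski sum; since $\mathrm{conv}(X\oplus Y)=\mathrm{conv}(X)\oplus\mathrm{conv}(Y)$, the south-west hull of $A_H$ is obtained by angle-merging the two child chains. At a parallel node $H=H_1\,\|\,H_2$ every path lies in exactly one part, so $A_H=A_{H_1}\cup A_{H_2}$ and its hull is the common south-west envelope of the two chains. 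Both operations yield a chain with at most the sum of the children's vertex counts, so by induction a subgraph with $k$ edges carries at most $k$ hull vertices; in particular the root chain has $O(|E|)$ vertices and the whole DP runs in polynomial time. Evaluating $\phi$ at the $O(|E|)$ root vertices and backtracking through the merges recovers an optimal path.

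The main obstacle, and the crux of the argument, is precisely this passage from the Pareto frontier to the convex hull. Maintaining full Pareto frontiers is the naive route, but the bicriteria (additive $c$, additive $u$) Pareto frontier can already be of exponential size on series-parallel graphs — a series composition of parallel two-edge gadgets produces $2^{\Theta(k)}$ non-dominated paths — so that route fails; it is exactly the concavity of $\phi$ that licenses discarding all non-hull points, after which the chain sizes grow only additively under both compositions. A secondary, routine issue is numerical precision: the coordinates $u$ involve logarithms, but every comparison the hull merges perform reduces to comparing slopes, i.e.\ to comparing products of the rationals $1-p_e$ with integer exponents, which can be decided exactly; building the decomposition tree itself is standard and polynomial.
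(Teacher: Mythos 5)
Your geometric framework is combinatorially sound, and it is in fact a genuine alternative to the paper's argument: the objective is concave and nondecreasing in $(c,u)$, the minimum of a concave function over a finite point set is attained at a vertex of its convex hull, the south-west chains compose correctly under Minkowski sum (series) and union (parallel), and the additive growth of chain sizes gives the $O(|E|)$ bound per component — structurally parallel to the paper's invariant of at most $|E'|$ candidate subpaths per series-parallel component, which the paper obtains via an exchange argument and disjoint intervals of $\pi$-values rather than via a hull. The gap is the step you dismissed as routine. A slope comparison between chain edges cross-multiplies by integer cost differences, so you must decide the sign of $(c_4-c_3)(u_2-u_1)-(c_2-c_1)(u_4-u_3)$, i.e., of a linear form $\sum_e m_e \ln(1-p_e)$ whose integer coefficients have magnitude up to roughly $|V|\,c_{\max}$; equivalently, you must compare $\prod_e (1-p_e)^{a_e}$ with $\prod_e (1-p_e)^{b_e}$ where the exponents are cost differences. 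Since costs are encoded in binary, these exponents are exponential in the input size: exact rational evaluation of the powers takes bit-length (hence time) proportional to $c_{\max}$, which is only pseudo-polynomial, and deciding the sign by numerical approximation requires precision for which no polynomial bound is known (the unconditional gap bound is of order $2^{-\Theta(|E|\,c_{\max}L)}$, and Baker-type lower bounds for linear forms in $n$ logarithms carry constants exponential in $n$). So as written your algorithm does not run in polynomial time, which is the entire content of the theorem; note the paper's own FPTAS statement is careful to measure complexity in $\log c_{\max}$.

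This is precisely the difficulty the paper's proof is engineered around: it never leaves $(c,\pi)$-coordinates, where the followers' objective $c(P)+F-F\pi(P)$ is \emph{linear} in $\pi$, so no logarithmic change of variables (and indeed no concavity) is needed; every threshold it maintains has the form $\frac{c(P)-c(P')}{\pi(P)-\pi(P')}\cdot\frac{\pi(P)}{F}$, a rational of polynomial bit-length, and every comparison the algorithm makes is a comparison of such rationals. Your approach can be repaired along the same lines: maintain the relevant (south-east) hull directly in $(c,\pi)$-space, where linearity of the objective again reduces optimization to hull vertices, series composition still maps hull vertices to compositions of hull vertices, and your log-space Minkowski-sum argument still certifies the linear size bound, because convexity of $-\ln$ implies every $(c,\pi)$-hull vertex is also a $(c,u)$-hull vertex. (You would also need to handle edges with $p_e=1$, where $u_e=\infty$.) But this repair replaces exactly the arithmetic you declared routine, so the proposal as it stands does not establish the theorem.
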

	
\paragraph{Sketch of proof.} The algorithm makes use of the inductive definition of a series-parallel graph as the series or parallel composition of two smaller series-parallel graphs. It recursively decomposes the graph and computes for each of the two subgraphs a set of candidate subpaths through them. It can be shown that an optimal path through the entire graph can be obtained from a composition of those candidate subpaths. By carefully bounding the probability of an inspection on the entire path depending on the parameters of the corresponding subpaths, we can restrict ourselves to a polynomial number of combinations of subpaths. A complete description and analysis of the algorithm can be found in \cref{app:sepa-alg}.

\paragraph{An exact algorithm for general graphs.}
We also provide an exact, however non-polynomial, algorithm for {\fmpN} on general graphs. The algorithm is based on a bidirectional vertex labeling approach and enumerating all Pareto-optimal paths, where a path $P$ is dominated by a path $P'$ if both $c(P) \geq c(P')$ and $\pi(P) \leq \pi(P')$. The algorithm is described in detail in \cref{app:exact-fmpN}. In order to speed up the enumeration, we use an adaptation of \emph{contraction hierarchies}~\cite{GeisbergerTS}, a preprocessing technique that performs node contractions, enabling a more efficient traversal of the paths in the graph. This technique is known to work especially well on sparse graphs, a common property of transit networks.

	\subsubsection{Adaptive Followers' Minimization Problem.}	
	The adaptive version of the followers minimization problem can be solved in polynomial time using a label setting algorithm. The algorithm makes use of the following observation on the structure of the cost function $\costA$, with $P[v, w]$ denoting the $v$-$w$-subpath of $P$.
	
		\begin{restatable}{lemma}{restateLemAdaptiveCost}\label{obs:adaptive_cost}
	  $\costA(P) \ = \ \costA(P[s, v]) \ + \pi(P[s, v]) \cdot \costA(P[v, t])$.
  \end{restatable}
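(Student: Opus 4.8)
The plan is to prove this decomposition identity by unwinding the definition of $\costA$ and splitting the summation at the point where the path $P$ passes through the intermediate vertex $v$. Write $P = (e_1, \dots, e_k)$ with $e_i = (v_i, v_{i+1})$, so that $v_1 = s$ and $v_{k+1} = t$, and suppose $v = v_{m+1}$ for some index $m$, so that $P[s,v] = (e_1, \dots, e_m)$ and $P[v,t] = (e_{m+1}, \dots, e_k)$. The definition gives
\[
\costA(P) = \sum_{i=1}^{k} \prod_{j=1}^{i-1}(1 - p_{e_j}) \cdot \Big(c_{e_i} + p_{e_i}\big(F + \SP{c}{v_{i+1}, t}\big)\Big),
\]
and the natural first step is to break this sum into the terms with $i \leq m$ and those with $i > m$.

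For the first group of terms ($i \leq m$), the partial product $\prod_{j=1}^{i-1}(1 - p_{e_j})$ only involves edges of $P[s,v]$, and since $\SP{c}{v_{i+1}, t}$ is the same quantity whether we regard it from the perspective of the full path or the subpath, these terms reassemble exactly into $\costA(P[s,v])$. Here I would use that the destination $t$ in the shortest-path correction term is a fixed endpoint of the whole problem, so the summands for $i \leq m$ are literally identical to those appearing in the definition of $\costA(P[s,v])$ (whose own destination is $v$, but whose correction terms still reference $t$ in the same way, since the adaptive follower always reverts to the shortest path to the global target $t$). This matching of the correction term is the one point worth checking carefully.

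For the second group ($i > m$), I would factor the product $\prod_{j=1}^{i-1}(1 - p_{e_j})$ as
\[
\prod_{j=1}^{m}(1 - p_{e_j}) \cdot \prod_{j=m+1}^{i-1}(1 - p_{e_j}) = \pi(P[s,v]) \cdot \prod_{j=m+1}^{i-1}(1 - p_{e_j}),
\]
using the definition $\pi(P[s,v]) = \prod_{e \in P[s,v]}(1 - p_e)$. Pulling the common factor $\pi(P[s,v])$ out of the sum over $i > m$ and reindexing the inner product to start from $m+1$, the remaining sum is precisely $\costA(P[v,t])$. Combining the two groups yields $\costA(P) = \costA(P[s,v]) + \pi(P[s,v]) \cdot \costA(P[v,t])$, as claimed.

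The computation is essentially routine once the indices are set up correctly, so I do not anticipate a serious obstacle; the main subtlety is the bookkeeping in aligning the shortest-path correction terms of the subpath $P[s,v]$ with those of the full path. The key observation that makes this work cleanly is that in the adaptive model the correction term $\SP{c}{v_{i+1}, t}$ always points to the \emph{global} destination $t$ regardless of which subpath we are decomposing, so no reinterpretation of these terms is needed and the telescoping factorization of $\pi$ goes through directly.
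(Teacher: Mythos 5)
Your proposal is correct and follows essentially the same route as the paper's proof: split the defining sum at the intermediate vertex, identify the prefix terms with $\costA(P[s,v])$, and factor $\pi(P[s,v])$ out of the suffix terms before reindexing. The subtlety you flag---that the correction terms $\SP{c}{v_{i+1},t}$ in $\costA(P[s,v])$ must be read as pointing to the global destination $t$---is exactly the convention the paper uses implicitly, so your treatment matches theirs.
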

  
  As an immediate consequence of \cref{obs:adaptive_cost}, we can deduce that every suffix of an optimal path must also be an optimal path.\footnote{Note however that the same is not true for prefixes of the optimal path, as the arcs of $P[s, v]$ also appear in the second summand.}
  
  \begin{corollary}
    Let $P^*$ be an $s$-$t$-path minimizing $\costA$ and let $v \in V(P^*)$. Then $P^*[v, t]$ is a $v$-$t$-path minimizing $\costA$.
  \end{corollary}
  
  In the spirit of Dijkstra's shortest path algorithm~\cite{dijkstra1959note}, our algorithm iteratively computes the cost of an optimal $v$-$t$-path for some vertex $v$. It maintains the set $S$ of vertices for which an optimal path has been computed and a value $\phi(v)$ for every vertex $v \in V$, denoting the cost of the cheapest $v$-$t$-path found so far. In every iteration, a vertex $w \in V \setminus S$ with minimum value $\phi(w)$ is added to $S$ and the labels $\phi(v)$ of vertices $v$ with $(v, w) \in E$ are updated accordingly. The complete algorithm and its analysis can be found in \cref{app:exact-fmpA}.

	\begin{restatable}{theorem}{restateThmAdaptiveFollower}\label{thm:adaptive_follower}
	  There is an algorithm that solves {\fmpA} in time $O(|E| + |V| \log |V|)$.
	\end{restatable}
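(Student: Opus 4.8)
The plan is to realize the claimed algorithm as a variant of Dijkstra's algorithm run \emph{backward} from $t$, computing for every vertex $v$ the optimal adaptive cost $d(v) := \min\{\costA(P) : P \text{ a } v\text{-}t\text{-path}\}$. Applying \cref{obs:adaptive_cost} with the prefix taken to be the single first edge of an optimal path, together with the suffix optimality from its corollary, shows that $d$ satisfies the Bellman-type recurrence $d(t) = 0$ and $d(v) = \min_{(v,w) \in E} g_{(v,w)}(d(w))$, where $g_{(v,w)}(x) := c_{(v,w)} + p_{(v,w)}\big(F + \SP{c}{w,t}\big) + (1 - p_{(v,w)})\,x$ is the cost of traversing $(v,w)$ and then continuing with cost $x$ from $w$. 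I would first compute all distances $\SP{c}{w,t}$ by one ordinary backward Dijkstra run from $t$, so that each $g_{(v,w)}$ evaluates in constant time. The main routine then maintains labels $\phi(v)$ (initialized to $\phi(t) = 0$ and $\infty$ otherwise), repeatedly extracts the vertex $w \notin S$ of minimum $\phi(w)$, adds it to $S$, and relaxes every in-edge $(v,w)$ via $\phi(v) \leftarrow \min\{\phi(v),\, g_{(v,w)}(\phi(w))\}$.

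Correctness reduces to the Dijkstra invariant that whenever a vertex $w$ is extracted, $\phi(w) = d(w)$. Since $\phi$ always records the cost of an actual path, $\phi(w) \ge d(w)$ throughout, so the content is the reverse inequality, which rests on two monotonicity properties of the operators $g_{(v,w)}$. Each $g_{(v,w)}$ is nondecreasing in its argument because $1 - p_{(v,w)} \ge 0$; the crucial second property is that it is \emph{expansive} at the optimum, i.e. $g_{(v,w)}(d(w)) \ge d(w)$, which plays the role of nonnegative edge weights. I expect this expansiveness to be the main obstacle, since the multiplicative factor $1 - p_{(v,w)}$ can shrink the carried-over cost and there is no a priori reason the additive term should compensate. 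The key lemma that unlocks it is the bound $d(v) \le F + \SP{c}{v,t}$ for every $v$: following a shortest $c$-path is never worse than paying the fine once and then walking the shortest path to $t$. I would prove this by induction along the shortest-path tree rooted at $t$, taking $(v,w)$ to be the first edge of a shortest $c$-path; the hypothesis $d(w) \le F + \SP{c}{w,t}$ makes the $(1-p_{(v,w)})$-weighted term collapse, yielding $g_{(v,w)}(d(w)) \le c_{(v,w)} + F + \SP{c}{w,t} = F + \SP{c}{v,t}$.

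Granting the bound, expansiveness is immediate from $g_{(v,w)}(d(w)) - d(w) = c_{(v,w)} + p_{(v,w)}\big(F + \SP{c}{w,t} - d(w)\big) \ge 0$. Applying it to consecutive edges of an optimal $v$-$t$-path $P^* = (v = u_0, u_1, \dots, u_m = t)$ and using suffix optimality in the form $d(u_i) = g_{(u_i,u_{i+1})}(d(u_{i+1}))$ yields the backward monotonicity $d(u_0) \ge d(u_1) \ge \dots \ge d(u_m)$ along optimal paths. With these two facts the standard exchange argument goes through: for an extracted $w$, take its optimal path and let $u_j$ be the first vertex of that path already in $S$; the relaxation of $(u_{j-1}, u_j)$ performed when $u_j$ entered $S$ forced $\phi(u_{j-1}) \le g_{(u_{j-1}, u_j)}(d(u_j)) = d(u_{j-1})$, and hence $\phi(w) \le \phi(u_{j-1}) = d(u_{j-1}) \le d(w)$ by minimality of $\phi(w)$ and backward monotonicity, giving $\phi(w) = d(w)$.

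For the running time, the preprocessing Dijkstra and the main routine each perform $|V|$ extractions and $O(|E|)$ relaxations; implemented with a Fibonacci heap this is $O(|E| + |V| \log |V|)$, matching the claimed bound. Returning $d(s)$, and optionally reconstructing the optimal path from stored relaxation predecessors, completes the algorithm.
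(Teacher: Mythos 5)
Your proposal is correct, and the algorithm it describes---backward Dijkstra-style label setting from $t$ with precomputed shortest-path distances, implemented with a Fibonacci heap---is exactly the paper's \cref{alg:fmpA}; the claimed running time matches for the same reasons. The difference lies in how correctness is argued. The paper's proof (\cref{lem:fmpA-algorihm}) runs a single induction by contradiction over the extraction order: for the first violating vertex $v$ it lower-bounds the cost of an \emph{arbitrary} competing $v$-$t$-path $P'$, splitting probabilistically on whether the passenger is caught before the path crosses into the settled set $S$, namely $\costA(P') \geq \bigl(1-\pi(P'[v,v'])\bigr)\bigl(\SP{c}{v,t}+F\bigr) + \pi(P'[v,v'])\,\phi(v')$, and it obtains the needed bound $\phi(v) \leq \SP{c}{v,t}+F$ on the fly, by applying its own inequality to a shortest path and invoking minimality of the extracted label. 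You instead isolate two structural facts about the true optimal costs $d(\cdot)$ that are independent of the algorithm: the bound $d(v) \leq F + \SP{c}{v,t}$, proved by induction along the shortest-path tree, and the resulting expansiveness $g_e(d(w)) \geq d(w)$, which yields monotonicity of $d$ along optimal paths; the Dijkstra invariant then follows by the standard exchange argument applied only to an \emph{optimal} path of the extracted vertex (using suffix optimality, the corollary following \cref{obs:adaptive_cost}), never to arbitrary paths. Both proofs hinge on the same key insight---the fine-plus-shortest-path bound is what substitutes for nonnegativity of edge weights---but your modular formulation via monotone, expansive edge operators makes explicit the general principle under which Dijkstra's method stays correct, whereas the paper's argument is self-contained in one induction and directly certifies that every competing path, not just optimal ones, costs at least the extracted label.
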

	
	\subsubsection{The Impact of Adaptivity.}
	
	We now prove a tight upper bound of $4/3$ on the ratio of
	the optimal cost between non-adaptive and adaptive strategies.
	
	\begin{theorem}\label{thm:price_of_nonadaptability}
	  Let $\optN$ be the cost of an optimal non-adaptive solution and $\optA$ be the cost of an optimal adaptive solution. Then $\optN \leq \frac{4}{3}\optA$.
	\end{theorem}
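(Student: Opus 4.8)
The plan is to compare the two cost functions along a single optimal adaptive path and to exploit the fact that the difference between them is exactly the expected gain from rerouting. First I would observe that for any $s$-$t$-path $P=(e_1,\dots,e_k)$ the fine contributions to $\costN(P)$ and $\costA(P)$ coincide: the telescoping identity $\sum_{i}\pi(P[s,v_i])\,p_{e_i}=1-\pi(P)$ turns the fine part of $\costA$ into $(1-\pi(P))F$, which is precisely the fine term of $\costN$. Consequently
\[
\costN(P)-\costA(P)\;=\;\sum_{i=1}^{k}\pi(P[s,v_i])\,p_{e_i}\,\big(c(P[v_{i+1},t])-\SP{c}{v_{i+1},t}\big)\;=:\;D(P)\;\ge\;0 ,
\]
so $\optA\le\optN$, and $D(P)$ is the expected amount saved by an adaptive passenger who, upon the first inspection on edge $e_i$, abandons $P$ in favour of a shortest path out of $v_{i+1}$. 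The whole theorem then reduces to exhibiting \emph{one} non-adaptive path whose cost is at most $\tfrac43\optA$.

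Next I would fix an optimal adaptive path $P^{*}=(e_1,\dots,e_k)$ and, guided by $D$, hedge: for each prefix length $r$ let $Q_r$ follow $P^{*}$ up to $v_r$ and then take a shortest path from $v_r$ to $t$, so that $Q_{k+1}=P^{*}$ and $Q_1$ is a plain shortest path. Every $Q_r$ is feasible, hence $\optN\le\min_r \costN(Q_r)$. Using $P^{*}$ alone (only $r=k+1$) does not suffice — it can cost twice $\optA$ — and the role of the family $\{Q_r\}$ is to let a passenger who foresees a risky initial segment commit to the shortest-path tail early. The key structural input is the corollary to \cref{obs:adaptive_cost}: every suffix $P^{*}[v,t]$ is adaptively optimal, and a shortest path carries no rerouting gain ($D=0$, as all its subpaths are already shortest), so the shortest path is an adaptive competitor at every $v$. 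Writing $\optA(v)$ and $\optN(v)$ for the optimal adaptive and non-adaptive costs of travelling from $v$ to $t$, this gives the per-vertex bound
\[
\optA(v)\;\le\;\SP{c}{v,t}+\big(1-\pi(\mathrm{SP}(v,t))\big)F,
\qquad\text{equivalently}\qquad \optA(v)-\SP{c}{v,t}\le F ,
\]
which controls how far the optimal adaptive path may overshoot the shortest path, and hence controls $D$.

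I would then attempt to prove $\optN(v)\le\tfrac43\optA(v)$ by induction along $P^{*}$, from $t$ backwards. At a vertex $v_i$, where $\optA(v_i)=c_{e_i}+p_{e_i}(F+\SP{c}{v_{i+1},t})+(1-p_{e_i})\,\optA(v_{i+1})$, there are two natural non-adaptive moves: (a) traverse $e_i$ and then follow an optimal non-adaptive tail, giving $\optN(v_i)\le c_{e_i}+\optN(v_{i+1})+p_{e_i}\,\mu\,F$ with $\mu$ the survival probability of that tail; or (b) reroute immediately onto a shortest path. Substituting the inductive hypothesis into branch (a) shows it suffices there that the overshoot $\optA(v_{i+1})-\SP{c}{v_{i+1},t}$ be at most $\big(1-\tfrac34\mu\big)F$; together with the overshoot bound above, branch (a) closes whenever the non-adaptive tail is not much safer than the shortest path, while branch (b) is designed to absorb the complementary case.

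The main obstacle is precisely this case split: the survival probabilities of the optimal non-adaptive tail and of the shortest-path tail enter the two branches with different coefficients and must be balanced against each other, and it is in this balancing that the constant appears — the extremal configuration behaves like maximising $x(1-x)$, whose value $\tfrac14$ produces the factor $\tfrac1{1-1/4}=\tfrac43$. Finally, for tightness I would exhibit the three-edge instance with $s\to u$ (cost $0$, probability $\tfrac12$), a ``long'' edge $u\to t$ (cost $\tfrac12$, probability $0$) and a ``short'' edge $u\to t$ (cost $0$, probability $1$), and $F=1$: here the optimal adaptive strategy reroutes after inspection on the first edge and achieves $\optA=\tfrac34$, whereas every non-adaptive path costs $1$, so $\optN=\tfrac43\optA$.
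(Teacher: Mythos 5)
Your preparatory material is correct: the identity $\costN(P)-\costA(P)=D(P)\ge 0$ (via the telescoping sum $\sum_i\pi(P[s,v_i])p_{e_i}=1-\pi(P)$), the suffix-optimality corollary, the overshoot bound $\optA(v)-\SP{c}{v,t}\le F$, and your tightness example (which is the paper's own example, rescaled by a factor $\tfrac12$) all check out. The gap is at the heart of the argument: the backward induction is only a plan, and in the form you state it, the inductive step cannot be closed. The hypothesis $\optN(v)\le\frac43\optA(v)$ is too weak to propagate through your two branches. Concretely, take $c_{e_i}=0$, $p_{e_i}=\frac12$, and from $v_{i+1}$ two parallel edges to $t$: one of cost $0.3F$ with inspection probability $0$, and one of cost $0$ with inspection probability $1$. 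Then $\SP{c}{v_{i+1},t}=0$, $\optA(v_{i+1})=\optN(v_{i+1})=0.3F$, $\mu=1$, and $\optA(v_i)=0.65F$, so the target is $\frac43\optA(v_i)=\frac{13}{15}F\approx 0.867F$. Branch (a) with the inductive hypothesis substituted gives only $\frac43(0.3F)+\frac12 F=0.9F$, and branch (b) gives $\SP{c}{v_i,t}+F=F$; both exceed the target. The theorem of course still holds at $v_i$ (the true value is $\optN(v_i)=0.8F$), but only because $\optN(v_{i+1})=0.3F$ is strictly better than the $\frac43$-ratio that the hypothesis records. So the induction would have to carry finer information than the ratio alone --- essentially the cost/survival trade-off of the optimal non-adaptive tail --- and your sketch neither identifies that information nor shows how it propagates. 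This is precisely the step you yourself flag as ``the main obstacle,'' and it is a genuine missing piece, not a routine balancing computation.

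The idea that makes any induction unnecessary is a \emph{global} lower bound on the adaptive cost, which is what the paper uses: for the adaptively optimal path $P$, the traveler in every realization covers at least the shortest-path distance and pays the fine with probability $1-\pi(P)$, hence $\optA\ \ge\ \pi(P)\,c(P)+\bigl(1-\pi(P)\bigr)\bigl(\SP{c}{s,t}+F\bigr)$. Pairing this with just two of your candidate paths, $Q_{k+1}=P$ and $Q_1$ (the plain shortest path), yields
\begin{align*}
\frac{\optN}{\optA}\ \le\ \max_{p\in[0,1],\,F\ge 0,\,0\le S\le C}\ \frac{\min\{C+pF,\ S+F\}}{(1-p)C+p(S+F)}\ \le\ \max_{p\in[0,1]}\ \frac{1}{1-p+p^2}\ \le\ \frac{4}{3},
\end{align*}
with $C=c(P)$, $S=\SP{c}{s,t}$, $p=1-\pi(P)$; the inner maximum forces $C+pF=S+F$, and then $p=\frac12$ gives $\frac43$. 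Everything happens at the endpoints $s$ and $t$: no intermediate vertices, no case split, and the intermediate paths $Q_2,\dots,Q_k$ in your family are never needed.
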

	
	\begin{proof}
			  Let $P$ be a path that minimizes $\costA$. Observe that
	  \[
		\optA = \costA(P) \geq \pi(P) c(P) + (1 - \pi(P))(\SP{c}{s,t} + F),\label{observation:fmpa}
		\]
	  where the inequality is due to the fact that in any case, the passenger has at least to traverse a shortest path from $s$ to $t$. Furthermore,
	  \[\optN \leq \min \left\{c(P) + (1 - \pi(P))F,\ \SP{c}{s,t} + F \right\}\]
	  as both $P$ and a shortest path from $s$ to $t$ are feasible non-adaptive solutions.
	  Thus,
	  \begin{align*}
	  	\frac{\optN}{\optA} \ \leq \ \max_{p \in [0, 1], \, F \geq 0 \atop 0 \leq S \leq C} \frac{\min \left\{C + pF,\ S + F \right\}}{(1 - p)C + p (S + F)} \ \leq \ \max_{p \in [0, 1]} \frac{1}{1 - p + p^2} \ \leq \ \frac{4}{3},
	  \end{align*}
	  where the second inequality comes from the fact that the maximum is attained for $C + pF = S + F$. \qed
	\end{proof}
	The following example shows that the bound given in \cref{thm:price_of_nonadaptability} is tight.
	
	\begin{example}
	  Let $V = \{s, v, t\}$ and $E = \{e_0, e_1, e_2\}$ where $e_0$ goes from $s$ to $v$ and $e_1$ and $e_2$ are two parallel arcs from $v$ to $t$. The travel costs are $c_{e_0} = c_{e_1} = 0$ and $c_{e_2} = 1$, the inspection probabilities are $p_{e_0} = 1/2$, $p_{e_1} = 1$, and $p_{e_2} = 0$, and $F = 2$. Observe that there are only two $s$-$t$-paths $P_1 = (e_0, e_1)$ and $P_2 = (e_0, e_2)$ and that $\costN(P_1) = \costN(P_2) = 2$ and $\costA(P_2) = \frac{3}{2}$, yielding a ratio of $\frac{4}{3}$ between optimal non-adaptive and adaptive strategies.
	\end{example}
	
\begin{remark}
		Note that in the proof of \cref{thm:price_of_nonadaptability}, we did not make use of the fact that the probability $\pi(P)$ is determined by individual probabilities on the arcs of the network. Therefore, the bound of $\frac{4}{3}$ given by the theorem is still true for arbitrary probabilities $\pi(P)$ for each path $P$. In particular, the result still holds if inspections at specific edges are not independent events.
	\end{remark}
	
\section{The Leader's Maximization Problem}\label{sec:leader}
In this section, we discuss algorithms and complexity results for the leader's maximization problem. On the theoretical side, we derive $N\!P$-hardness for a restricted special case of the problem and an LP relaxation, which yields upper bounds on the profit for all four model variants. For the flexible-fare setting, we also obtain a constant factor approximation. On the practical side, we propose a local search procedure that, combined with initial solutions from the LP relaxation computes close-to-optimal solutions for all four variants of the problem.

\subsubsection{Complexity of the Leader's Problem.}
The following $N\!P$-hardness result can be derived by a simple reduction from the directed multicut problem; see \cref{app:hardness}.

\begin{restatable}{theorem}{restateThmHardness}\label{thm:hardness}
$\LOP^L_X$ for $L \in \{\fixed, \flex\}$ and $X \in \{\adap, \nAdap\}$ is strongly $N\!P$-hard, even when restricted to instances with $|K| = 2$ and $c \equiv 0$.
\end{restatable}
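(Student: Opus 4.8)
The plan is to reduce from the directed multicut problem with two source–sink pairs $(s_1,t_1)$ and $(s_2,t_2)$, which is $N\!P$-hard (and, since its instances use unit edge weights, strongly so): given a directed graph $G$ and a budget $B \in \N$, decide whether there is a set of at most $B$ edges whose removal destroys every $s_1$-$t_1$- and every $s_2$-$t_2$-path. I would keep $G$ and the two terminal pairs as the two commodities, set all transit costs to $c \equiv 0$, fix the fine to $F = 1$ and the demands to $d_1 = d_2 = 1$, and reuse the budget $B$ in the inspection constraint $\sum_{e} p_e \le B$.

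The observation that lets all four variants collapse to one argument is that under $c \equiv 0$ the two followers' cost functions coincide. Indeed $\SP{c}{v,w} = 0$ for all $v,w$, so the adaptive cost telescopes to
\[
\costA(P) \;=\; F\sum_{i=1}^{k}\Big(\prod_{j=1}^{i-1}(1-p_{e_j})\Big)p_{e_i} \;=\; F\big(1 - \pi(P)\big) \;=\; \costN(P),
\]
which is just the expected fine along $P$. Hence for both $X \in \{\nAdap, \adap\}$ the flexible revenue per passenger is $\Gamma^{\flex}_{X,i}(p) = \min_{P \in \paths_i} F(1 - \pi(P)) = F\big(1 - \max_{P \in \paths_i}\pi(P)\big)$. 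For the fixed-fare variants I would set the ticket price to $T_i = F$, so that the ticket option $P_i^T$ has cost $F$ and $\pi(P_i^T)=0$; a short case check (splitting on whether $\max_{P}\pi(P)$ is zero or strictly positive) then shows that $\Gamma^{\fixed}_{X,i}(p)$ equals the same expression $F\big(1 - \max_{P}\pi(P)\big)$. Thus in every one of the four variants the leader's objective becomes $F\sum_{i=1}^2\big(1 - \max_{P\in\paths_i}\pi(P)\big)$.

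It then remains to show that the leader can attain the maximum possible value $2F$ if and only if $G$ admits a multicut of size at most $B$. For the forward direction, given a multicut $C$ with $|C|\le B$, setting $p_e = 1$ for $e \in C$ and $p_e = 0$ otherwise is budget-feasible, and since every $s_i$-$t_i$-path meets $C$ we get $\pi(P) = 0$ for all such paths, hence total revenue $2F$. For the converse, revenue $2F$ forces $\max_{P \in \paths_i}\pi(P) = 0$ for both commodities, i.e.\ $\pi(P)=\prod_{e\in P}(1-p_e)=0$ for \emph{every} $s_i$-$t_i$-path; since a product of factors in $[0,1]$ vanishes only if some factor is $0$, every such path must contain an edge $e$ with $p_e = 1$. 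Hence $U := \{e : p_e = 1\}$ is a multicut for both commodities, and $|U| \le \sum_{e} p_e \le B$, so a multicut of size at most $B$ exists.

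The main obstacle, and the reason the converse has to be phrased through $\max_{P}\pi(P)=0$ rather than through the integrality of the chosen edges, is that the leader is allowed \emph{fractional} inspection probabilities, so one cannot assume a priori that an optimal $p$ is $0/1$; pinning the decision at the extreme value $2F$ is exactly what rules out fractional solutions in that direction, since any path with a strictly positive surviving probability would strictly lower the revenue. Because the directed two-commodity multicut instance uses unit weights and a budget bounded by $|E|$, all numbers produced by the reduction are polynomially bounded, so it establishes strong $N\!P$-hardness of $\LOP^L_X$ for every $L \in \{\fixed,\flex\}$ and $X \in \{\adap,\nAdap\}$, already for $|K|=2$ and $c\equiv 0$.
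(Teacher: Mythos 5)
Your proposal is correct and follows essentially the same route as the paper: a reduction from directed two-commodity multicut with $c \equiv 0$, $T_i = F$, budget equal to the cut-size bound, the observation that all four variants collapse because adaptive and non-adaptive costs coincide when travel costs vanish, and the equivalence that revenue $2F$ is attainable iff a multicut of the given cardinality exists (with the same argument that fractional probabilities cannot achieve the extreme value, forcing the $p_e = 1$ edges to form a cut). Your explicit telescoping of $\costA$ and the case analysis for the fixed-fare revenue merely spell out details the paper states more tersely.
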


\subsubsection{LP Relaxation.}
Let $\OPT^L_X$ denote the value of an optimal solution to the corresponding version of the leaders maximization problem for $L \in \{\flex, \fixed\}$ and $X \in \{\adap, \nAdap\}$. The following lemmas relate these values to one another.

\begin{restatable}{lemma}{restateLemOptFlexFix}\label{lem:opt_val1}
$\OPTfn \geq \OPTxn$ and $\OPTfa \geq \OPTxa$.
\end{restatable}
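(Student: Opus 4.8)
The plan is to reduce both inequalities to a single \emph{pointwise} comparison of the per-passenger revenues: I will show that for every feasible inspection strategy $p$, every commodity $i \in K$, and every follower model $X \in \{\nAdap, \adap\}$ one has $\Gamma^{\flex}_{X,i}(p) \ge \Gamma^{\fixed}_{X,i}(p)$. Granting this, the lemma follows at once, since the flexible and fixed variants share the same feasible region for $p$, namely the budget constraint $\sum_{e} p_e \le B$. Concretely, if $p^*$ is an optimal strategy for the fixed-fare problem, then evaluating the flexible objective at the feasible point $p^*$ gives $\OPTxn = \sum_{i} d_i\, \Gamma^{\fixed}_{\nAdap,i}(p^*) \le \sum_{i} d_i\, \Gamma^{\flex}_{\nAdap,i}(p^*) \le \OPTfn$, and the identical argument applied to the adaptive follower model yields $\OPTxa \le \OPTfa$.

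To establish the pointwise inequality, I fix $p$ and $i$ and let $P^*$ attain $\min_{P \in \paths_i \cup \{P_i^T\}} f_{X,p,i}(P)$, so that $\Gamma^{\fixed}_{X,i}(p) = F\,(1 - \pi(P^*))$, and I split into two cases according to the follower's choice. If the follower buys the ticket, i.e.\ $P^* = P_i^T$, then $\Gamma^{\fixed}_{X,i}(p) = F\,(1 - \pi(P_i^T)) = T_i$; optimality of the ticket option means $\SP{c}{s_i,t_i} + T_i = f_{X,p,i}(P_i^T) \le f_{X,p,i}(P)$ for all $P \in \paths_i$, and rearranging gives $T_i \le \min_{P \in \paths_i} f_{X,p,i}(P) - \SP{c}{s_i,t_i} = \Gamma^{\flex}_{X,i}(p)$, as needed.

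The remaining case is evasion, i.e.\ $P^* \in \paths_i$; here $P^*$ also minimizes $f_{X,p,i}$ over $\paths_i$, so $\Gamma^{\flex}_{X,i}(p) = f_{X,p,i}(P^*) - \SP{c}{s_i,t_i}$, and I must verify $f_{X,p,i}(P^*) - \SP{c}{s_i,t_i} \ge F\,(1 - \pi(P^*))$. For non-adaptive followers this is immediate: expanding $f_{\nAdap,p,i}(P^*) = c(P^*) + (1 - \pi(P^*))F$ and using $c(P^*) \ge \SP{c}{s_i,t_i}$ leaves exactly $F\,(1 - \pi(P^*))$ plus a nonnegative remainder. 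The adaptive case is the main obstacle, because $f_{\adap,p,i}$ does not decompose into travel cost plus expected fine; to handle it I will invoke the lower bound $f_{\adap,p,i}(P^*) \ge \pi(P^*)\,c(P^*) + (1 - \pi(P^*))\big(\SP{c}{s_i,t_i} + F\big)$ already derived in the proof of \cref{thm:price_of_nonadaptability}. Subtracting $\SP{c}{s_i,t_i}$ and regrouping the right-hand side yields $\pi(P^*)\big(c(P^*) - \SP{c}{s_i,t_i}\big) + F\,(1 - \pi(P^*))$, which is at least $F\,(1 - \pi(P^*))$ since $c(P^*) \ge \SP{c}{s_i,t_i}$. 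This settles the evasion case in both follower models and completes the pointwise inequality, and with it the lemma.
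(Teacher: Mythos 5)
Your proof is correct and takes essentially the same route as the paper: both arguments rest on the pointwise inequality $F\,(1-\pi(P)) \le f_{X,p,i}(P) - \SP{c}{s_i,t_i}$ for every $P \in \paths_i$, which yields $\Gamma^{\flex}_{X,i}(p) \ge \Gamma^{\fixed}_{X,i}(p)$ for every feasible $p$ and hence the claim by evaluating the flexible objective at the fixed-fare optimum. If anything, your write-up is more careful than the paper's one-line proof: you justify the key inequality in the adaptive case via the lower bound from \cref{thm:price_of_nonadaptability} and handle the ticket-buying case separately, both of which the paper leaves implicit (indeed, the paper states the key inequality with its direction reversed, an apparent typo).
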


\begin{restatable}{lemma}{restateLemOptAdap}\label{lem:opt_val2}
$\frac{4}{3} \OPTfa \geq \OPTfn \geq \OPTfa$.
\end{restatable}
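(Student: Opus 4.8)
The plan is to prove both inequalities by comparing the two revenue functions pointwise for a fixed feasible inspection strategy $p$ and then lifting to the optima. The starting point is that, by the definition of the flexible-fare revenue, $\Gamma^{\flex}_{X,i}(p) = \min_{P \in \paths_i} f_{X,p,i}(P) - \SP{c}{s_i,t_i}$, so that writing $\optN^{(i)}(p)$ and $\optA^{(i)}(p)$ for the optimal values of {\fmpN} and {\fmpA} for commodity $i$ under $p$, we have $\Gamma^{\flex}_{\nAdap,i}(p) = \optN^{(i)}(p) - \SP{c}{s_i,t_i}$ and $\Gamma^{\flex}_{\adap,i}(p) = \optA^{(i)}(p) - \SP{c}{s_i,t_i}$. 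For the right inequality $\OPTfn \geq \OPTfa$, I would observe that for every single path $P$ one has $\costA(P) \leq \costN(P)$: an adaptive follower pays the same expected fine, but upon being caught reroutes along a shortest path rather than continuing on $P$, which can only be cheaper. Taking minima gives $\optA^{(i)}(p) \leq \optN^{(i)}(p)$, hence $\Gamma^{\flex}_{\adap,i}(p) \leq \Gamma^{\flex}_{\nAdap,i}(p)$ for all $i$ and $p$. Summing with weights $d_i$ and evaluating at a maximizer $p^{\adap}$ of the adaptive objective yields $\OPTfn \geq \sum_i d_i \Gamma^{\flex}_{\nAdap,i}(p^{\adap}) \geq \sum_i d_i \Gamma^{\flex}_{\adap,i}(p^{\adap}) = \OPTfa$.

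The left inequality $\tfrac{4}{3}\OPTfa \geq \OPTfn$ is where the real work lies, and I would again reduce it to the pointwise claim $\Gamma^{\flex}_{\nAdap,i}(p) \leq \tfrac{4}{3}\,\Gamma^{\flex}_{\adap,i}(p)$. This resembles \cref{thm:price_of_nonadaptability}, but the flexible revenues are the followers' costs shifted down by $S := \SP{c}{s_i,t_i}$, so I would re-run the estimate of that theorem while carrying the shift. With $P$ an adaptive-optimal path, $C := c(P)$ and $\pi := \pi(P)$, the two bounds $\optA^{(i)}(p) \geq \pi C + (1-\pi)(S+F)$ and $\optN^{(i)}(p) \leq \min\{C + (1-\pi)F,\ S+F\}$ give, after subtracting $S$ and substituting $D := C - S \geq 0$,
\[
\frac{\Gamma^{\flex}_{\nAdap,i}(p)}{\Gamma^{\flex}_{\adap,i}(p)} \;\leq\; \frac{\min\{\,D + (1-\pi)F,\ F\,\}}{\pi D + (1-\pi)F}.
\]
This is exactly the quantity maximized in \cref{thm:price_of_nonadaptability} specialized to shortest-path distance $0$ and path cost $D$ (with catch probability $1-\pi$); since that theorem shows the maximum over all admissible parameters equals $\tfrac{4}{3}$, the ratio is at most $\tfrac{4}{3}$. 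Summing and evaluating at a maximizer $p^{\nAdap}$ of the non-adaptive objective then gives $\OPTfn = \sum_i d_i \Gamma^{\flex}_{\nAdap,i}(p^{\nAdap}) \leq \tfrac{4}{3}\sum_i d_i \Gamma^{\flex}_{\adap,i}(p^{\nAdap}) \leq \tfrac{4}{3}\OPTfa$.

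The main obstacle is precisely this shift by the shortest-path distance: one cannot black-box \cref{thm:price_of_nonadaptability} as $\optN \leq \tfrac{4}{3}\optA$, because subtracting a common positive quantity from the numerator and denominator of a ratio exceeding $1$ can in general only increase it. The point I must verify is that carrying $S$ through the parametric optimization does not hurt — concretely, that the worst case of the bound occurs at $S = 0$, which is exactly the specialization appearing above. I would also dispatch separately the degenerate case $\Gamma^{\flex}_{\adap,i}(p) = 0$, i.e. $\optA^{(i)}(p) = \SP{c}{s_i,t_i}$: here the lower bound on $\optA^{(i)}(p)$ forces $\pi D = 0$ and $(1-\pi)F = 0$, which in turn makes the numerator bound vanish and hence $\Gamma^{\flex}_{\nAdap,i}(p) = 0$ as well, so the pointwise inequality holds trivially and no division by zero occurs.
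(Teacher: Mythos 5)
Your proof is correct and takes essentially the same route as the paper: the paper likewise reduces the statement to the pointwise per-commodity inequality $\frac{4}{3}\,\Gamma^{\flex}_{\adap,i}(p) \geq \Gamma^{\flex}_{\nAdap,i}(p) \geq \Gamma^{\flex}_{\adap,i}(p)$ and proves the left inequality via a shifted variant of \cref{thm:price_of_nonadaptability} (\cref{lem:adaptivity_ratio_leader}) that carries the shortest-path distance through the same parametric optimization --- your substitution $D = C - S$, which reduces matters to the $S=0$ specialization of that parametric bound, is exactly the paper's computation in a different guise. Your explicit verification of the easy direction ($\costA(P) \leq \costN(P)$ pointwise) and of the degenerate case $\Gamma^{\flex}_{\adap,i}(p) = 0$ is somewhat more careful than the paper, which leaves both steps implicit.
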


In order to obtain the LP relaxation, we will make use of a linearization approach, which is based on the following classic approximation.

\begin{restatable}{lemma}{restateLemSumApprox}\label{lem:sum_approximation}
$\textstyle 1-\pi(P)\ \leq \ \min\big\{\sum_{e \in P} p_e, \ 1\big\} \ \leq \ \frac{1}{1 - \operatorname{e}^{-1}} \left(1-\pi(P)\right)$.
\end{restatable}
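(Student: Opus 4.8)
The plan is to prove the two inequalities separately. Write $P = (e_1, \dots, e_k)$ with $p_i := p_{e_i} \in [0,1]$ and set $S := \sum_{i=1}^{k} p_i$, so that $\pi(P) = \prod_{i=1}^{k} (1 - p_i)$. Both bounds then reduce to elementary inequalities relating $S$ to the product $\prod_{i=1}^{k} (1 - p_i)$, and neither uses any structure of the network beyond $p_i \in [0,1]$.

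For the left inequality $1 - \pi(P) \le \min\{S, 1\}$, observe first that $1 - \pi(P) \le 1$ holds trivially since $\pi(P) \ge 0$. It remains to show $1 - \pi(P) \le S$, equivalently $\prod_{i=1}^{k}(1 - p_i) \ge 1 - S$. This is the Weierstrass product inequality, which I would establish by induction on $k$: the inductive step multiplies the bound for $k-1$ factors by $(1 - p_k)$ and then discards the nonnegative cross term $p_k \sum_{i < k} p_i$. Combining $1 - \pi(P) \le 1$ with $1 - \pi(P) \le S$ gives the left inequality.

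For the right inequality I would invoke the standard estimate $1 - p_i \le \operatorname{e}^{-p_i}$, which multiplies over the edges of $P$ to yield $\pi(P) \le \operatorname{e}^{-S}$, and hence $1 - \pi(P) \ge 1 - \operatorname{e}^{-S}$. I then distinguish two cases according to the value of $\min\{S,1\}$. If $S \ge 1$, then $1 - \pi(P) \ge 1 - \operatorname{e}^{-S} \ge 1 - \operatorname{e}^{-1} = (1 - \operatorname{e}^{-1}) \cdot \min\{S, 1\}$, as desired. If $S \le 1$, then $\min\{S, 1\} = S$ and the target inequality becomes $1 - \operatorname{e}^{-S} \ge (1 - \operatorname{e}^{-1})\, S$.

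The main obstacle is this last step, since it is exactly where the constant $\tfrac{1}{1 - \operatorname{e}^{-1}}$ is pinned down. I would argue it by concavity: the function $g(S) := 1 - \operatorname{e}^{-S} - (1 - \operatorname{e}^{-1}) S$ satisfies $g''(S) = -\operatorname{e}^{-S} < 0$, so $g$ is concave on $[0,1]$; since $g(0) = g(1) = 0$, concavity forces $g(S) \ge 0$ throughout $[0,1]$. This delivers $1 - \operatorname{e}^{-S} \ge (1 - \operatorname{e}^{-1})\, S$ and completes the proof. The equality at the endpoint $S = 1$, together with the fact that $\pi(P) \le \operatorname{e}^{-S}$ becomes tight as the individual $p_i$ shrink, indicates that the constant appearing in the statement cannot be improved.
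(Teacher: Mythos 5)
Your proof is correct, and it follows the same overall plan as the paper's: the left inequality via the Weierstrass product inequality $\prod_i(1-p_i)\ge 1-\sum_i p_i$ proved by induction (discarding the nonnegative cross term), and the right inequality by bounding $\pi(P)$ by an exponential and then comparing $1-\operatorname{e}^{-x}$ with the linear function $(1-\operatorname{e}^{-1})x$ on $[0,1]$. The technical route for the right inequality differs in two small ways. The paper sets $\sigma := \min\{\sum_{e\in P} p_e,\,1\}$ and applies AM--GM to obtain $\pi(P)\le (1-\sigma/|P|)^{|P|}\le \operatorname{e}^{-\sigma}$ in one step, which makes any case distinction unnecessary; it then asserts (without proof) that $(1-\operatorname{e}^{-x})/x$ is decreasing on $(0,1]$, so the worst case is $x=1$, giving the constant $1-\operatorname{e}^{-1}$. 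You instead use the termwise bound $1-p_i\le \operatorname{e}^{-p_i}$ to get $\pi(P)\le \operatorname{e}^{-S}$ with $S=\sum_i p_i$, split on $S\ge 1$ versus $S\le 1$, and in the latter case establish $1-\operatorname{e}^{-S}\ge (1-\operatorname{e}^{-1})S$ by concavity of $g(S)=1-\operatorname{e}^{-S}-(1-\operatorname{e}^{-1})S$ together with $g(0)=g(1)=0$. Your concavity argument is a rigorous, self-contained replacement for the monotonicity fact the paper merely quotes, and your termwise exponential bound is more elementary than AM--GM; conversely, the paper's device of working with $\sigma$ directly buys a cleaner, case-free derivation. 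Both arguments are sound and pin down the same tight constant.
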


Using \cref{lem:sum_approximation}, we replace the term $1 - \pi(P)$ in the followers' objective function by $\sum_{e \in P} p_e$. Note that after this replacement, the non-adaptive version of FMP corresponds to a classic shortest path problem. Using the dual of the shortest path LP, we derive the following LP relaxation for $\LOPfn$.
\begin{alignat*}{3}
	\LPflex \quad \max \quad 				&& \sum_{i \in K} d_i (y_i(t_i) \mathrlap{- y_i(s_i) - \SP{c}{s_i, t_i})} & \\
  \text{s.t.} \quad && \sum_{e \in E} p_e & \ \leq \ B & \\
										&& y_i(w) - y_i(v) & \ \leq \ c_e + F p_e & \qquad \forall\; i \in K, \; e = (v, w) \in E \\
										&& p_e & \in [0, 1] & \forall\; e \in E
\end{alignat*}

The value $\OPT_\textup{LP}$ of an optimal solution to $\LPflex$ yields an upper bound to all four variants of the leader's maximization problem.

\begin{restatable}{lemma}{restateLemLpUpper}\label{lem:lp_upper_bound} $\OPT_\textup{LP} \geq \OPT^L_X$ for all $L \in \{\flex, \fixed\}$ and $X \in \{\adap, \nAdap\}$.
\end{restatable}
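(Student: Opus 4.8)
The plan is to reduce all four cases to a single one. From \cref{lem:opt_val1} we have $\OPTfn \geq \OPTxn$ and $\OPTfa \geq \OPTxa$, while \cref{lem:opt_val2} gives $\OPTfn \geq \OPTfa$. Chaining these yields $\OPTfn \geq \OPTxn$, $\OPTfn \geq \OPTfa \geq \OPTxa$, and trivially $\OPTfn \geq \OPTfn$, so $\OPTfn$ dominates all four leader optima. It therefore suffices to prove the single inequality $\OPT_\textup{LP} \geq \OPTfn$.

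To establish this, I would exhibit a feasible $\LPflex$ solution whose objective value is at least $\OPTfn$. Fix an optimal inspection strategy $p^*$ for $\LOPfn$; it satisfies $\sum_{e} p^*_e \leq B$ and $p^* \in [0,1]^E$, so it can serve unchanged as the $p$-part of an LP solution. The revenue it generates is $\sum_{i} d_i \Gamma^{\flex}_{\nAdap,i}(p^*)$, and the key is to upper bound each $\Gamma^{\flex}_{\nAdap,i}(p^*)$ by a shortest-path quantity. Using the \emph{upper} bound $1 - \pi(P) \leq \sum_{e \in P} p^*_e$ from \cref{lem:sum_approximation}, for every $P \in \mathcal{P}_i$ we get $f_{\nAdap,p^*,i}(P) = c(P) + (1 - \pi(P))F \leq \sum_{e \in P}(c_e + F p^*_e)$; taking the minimum over $P$ on both sides (which preserves the inequality) gives $\Gamma^{\flex}_{\nAdap,i}(p^*) \leq \min_{P \in \mathcal{P}_i} \sum_{e \in P}(c_e + F p^*_e) - \SP{c}{s_i,t_i}$, i.e.\ the shortest-path distance under the linearized edge lengths $c_e + F p^*_e$ minus the plain shortest-path distance.

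It remains to read off LP potentials realizing these shortest-path distances. For each commodity $i$ I would set $y_i(v)$ equal to the shortest-path distance from $s_i$ to $v$ with respect to edge lengths $c_e + F p^*_e$ (with $y_i(s_i)=0$). The shortest-path triangle inequality $y_i(w) \leq y_i(v) + (c_e + F p^*_e)$ is exactly the LP constraint $y_i(w) - y_i(v) \leq c_e + F p^*_e$ for each $e = (v,w)$, so $(p^*, y)$ is LP-feasible, and $y_i(t_i) - y_i(s_i) = \min_{P \in \mathcal{P}_i}\sum_{e \in P}(c_e + F p^*_e)$. Plugging into the LP objective and applying the bound from the previous paragraph gives $\OPT_\textup{LP} \geq \sum_i d_i\big(y_i(t_i) - y_i(s_i) - \SP{c}{s_i,t_i}\big) \geq \sum_i d_i\, \Gamma^{\flex}_{\nAdap,i}(p^*) = \OPTfn$, which together with the domination established above proves the lemma.

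The main thing to get right is the direction of the linearization: \cref{lem:sum_approximation} must be used as an \emph{over}-estimate of $1-\pi(P)$ so that it inflates the follower's cost (and hence the leader's flexible-fare revenue), and one must check that taking a minimum over paths preserves this per-path inequality. Everything else is bookkeeping — the reduction to $\OPTfn$ via \cref{lem:opt_val1,lem:opt_val2} is what keeps the argument to a single case, and shortest-path LP duality enters only through the elementary triangle-inequality choice of potentials.
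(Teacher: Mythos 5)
Your proposal is correct and matches the paper's proof essentially step for step: both reduce all four variants to $\OPTfn$ via \cref{lem:opt_val1,lem:opt_val2}, take an optimal $p$ for $\LOPfn$, define potentials $y_i(v) = \SP{c + Fp}{s_i, v}$, and use the first inequality of \cref{lem:sum_approximation} to show the linearized path costs dominate $f_{\nAdap,p,i}$, hence the LP objective dominates $\sum_i d_i \Gamma^{\flex}_{\nAdap,i}(p)$. No meaningful differences to report.
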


Using \cref{lem:opt_val2,lem:sum_approximation}, we can also derive that using an optimal solution to $\LPflex$ yields approximation algorithms for $\LOPfn$ and $\LOPfa$. 

\begin{restatable}{theorem}{restateThmLpApprox}\label{thm:lopfn-approx}
	There is a $(1 - 1/\operatorname{e})$-approximation algorithm for $\LOPfn$.
\end{restatable}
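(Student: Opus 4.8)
The plan is to solve $\LPflex$, read the inspection probabilities off an optimal solution, and show that the true revenue they generate is within a factor $(1-\operatorname{e}^{-1})$ of $\OPTfn$. Concretely, I would first compute an optimal solution $(p^*, y^*)$ of $\LPflex$ and output $p^*$ as the leader's strategy; this is feasible for the leader's problem without modification, since the constraints $\sum_{e\in E} p_e \le B$ and $p_e \in [0,1]$ are shared by both programs. By \cref{lem:lp_upper_bound} we have $\OPT_\textup{LP} \ge \OPTfn$, so it suffices to give a lower bound on the realized revenue $\sum_{i \in K} d_i\,\Gamma^{\flex}_{\nAdap,i}(p^*)$ in terms of $\OPT_\textup{LP}$.

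The heart of the argument is a per-commodity comparison. Fixing $i \in K$ and writing $D_i := \SP{c}{s_i, t_i}$, the LP contributes $\mathrm{LP}_i := y^*_i(t_i) - y^*_i(s_i) - D_i$, where at optimality $y^*_i(t_i) - y^*_i(s_i) = \min_{P \in \paths_i}\big(c(P) + F\sum_{e \in P} p^*_e\big)$ is a shortest path under the linearized edge costs $c_e + F p^*_e$. The realized revenue is $\Gamma^{\flex}_{\nAdap,i}(p^*) = \min_{P \in \paths_i} f_{\nAdap,p^*,i}(P) - D_i$ with $f_{\nAdap,p^*,i}(P) = c(P) + (1-\pi(P))F$. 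The left inequality of \cref{lem:sum_approximation}, $1-\pi(P) \le \min\{\sum_{e\in P} p^*_e, 1\}$, shows that the linearized cost dominates the true cost path by path, while the right inequality supplies the reverse estimate up to the factor $\tfrac{1}{1-\operatorname{e}^{-1}}$. Combining these two bounds with the trivial inequality $c(P) \ge D_i$ for every $P \in \paths_i$ and then taking the minimum over $\paths_i$, I expect to obtain $\Gamma^{\flex}_{\nAdap,i}(p^*) \ge (1-\operatorname{e}^{-1})\,\min\{\mathrm{LP}_i,\, F\}$, where the cap by $F$ reflects the elementary fact that no passenger ever pays more than the fine.

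The main obstacle is precisely this cap. Because the linearization $\sum_{e\in P} p_e$ is unbounded whereas $1-\pi(P) \le 1$, an optimal LP solution can inflate some $\mathrm{LP}_i$ far beyond $F$ by concentrating budget on a single commodity, even though the achievable revenue per passenger never exceeds $F$; hence one cannot hope for the clean term-by-term bound $\Gamma^{\flex}_{\nAdap,i}(p^*) \ge (1-\operatorname{e}^{-1})\mathrm{LP}_i$, and comparing directly against the raw quantity $\OPT_\textup{LP}$ fails. To repair this I would augment $\LPflex$ with the valid constraints $y_i(t_i) - y_i(s_i) \le D_i + F$ for all $i \in K$. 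These do not lower the optimum below $\OPTfn$ (by the left inequality of \cref{lem:sum_approximation} together with $\Gamma^{\flex}_{\nAdap,i} \le F$, the optimal leader strategy still induces a feasible point of value $\OPTfn$), and they force $\mathrm{LP}_i \le F$ at optimality. A short case distinction then removes the truncation in \cref{lem:sum_approximation}: either the minimizing path already satisfies $\sum_{e\in P} p^*_e \le 1$, in which case the two estimates above apply verbatim, or the contribution sits at the cap $F$ and one checks $\Gamma^{\flex}_{\nAdap,i}(p^*) \ge (1-\operatorname{e}^{-1})F$ directly from $1-\pi \ge (1-\operatorname{e}^{-1})\min\{\sum_{e\in P} p^*_e, 1\}$.

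Summing the resulting per-commodity bound $\Gamma^{\flex}_{\nAdap,i}(p^*) \ge (1-\operatorname{e}^{-1})\mathrm{LP}_i$ with weights $d_i$ yields $\sum_{i\in K} d_i\,\Gamma^{\flex}_{\nAdap,i}(p^*) \ge (1-\operatorname{e}^{-1})\OPT_\textup{LP} \ge (1-\operatorname{e}^{-1})\OPTfn$, which establishes the theorem. A single source connected to its sink through a long chain of zero-cost edges, with total budget $B = 1$, shows that this factor is essentially the best one can extract from the linearization: spreading the budget uniformly drives $1-\pi$ down to $1-\operatorname{e}^{-1}$ while the linearized value stays at its true optimum, so the $(1-\operatorname{e}^{-1})$ guarantee is tight for this rounding.
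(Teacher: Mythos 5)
Your proposal is correct, and its skeleton coincides with the paper's proof: solve $\LPflex$, output its probabilities, and compare revenue to the LP value commodity by commodity via \cref{lem:sum_approximation}. The genuine difference is that you justify the cap, which the paper's own proof does not. The paper's argument starts from $\lambda_i \le \min\bigl\{\sum_{e \in P}(c_e + F p_e) - \SP{c}{s_i,t_i},\, T_i\bigr\}$ and needs the second term (together with $T_i \le F$) exactly where the truncation $\min\bigl\{\sum_{e\in P} p_e,\,1\bigr\}$ bites; but no constraint of $\LPflex$ bounds $\lambda_i$ by $T_i$ or by $F$, and in the flexible-fare setting there is no exogenous $T_i$ at all. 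This is precisely the obstacle you flagged, and it is not cosmetic: take commodity $1$ with demand $1+\varepsilon$ and a unique two-edge zero-cost path, commodity $2$ with demand $1$ and a single zero-cost edge, and budget $B=2$. The unique optimum of the uncapped $\LPflex$ puts $p_e=1$ on both edges of the first path (so $\lambda_1 = 2F$), and outputting these probabilities realizes revenue $(1+\varepsilon)F$, whereas $\OPTfn = (2+\varepsilon)F$; with $m$ commodities the same construction degrades the ratio to roughly $1/m$. So the unmodified algorithm is not a $(1-1/\operatorname{e})$-approximation, and your added constraints $y_i(t_i) - y_i(s_i) \le \SP{c}{s_i,t_i} + F$, after which your case distinction goes through verbatim, are what make the statement's proof sound.

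The one step you should write out is that the capped LP still upper-bounds $\OPTfn$: starting from an optimal leader strategy $p^*$, the natural potentials $\SP{c+Fp^*}{s_i,v}$ may violate your new constraint, so you must pass to the truncated potentials $y_i(v) := \min\bigl\{\SP{c+Fp^*}{s_i,v},\, \SP{c}{s_i,v} + F\bigr\}$ and verify that every edge constraint survives the truncation (a short case analysis using the triangle inequality for $\SP{c}{\cdot,\cdot}$ and $p^*_e \ge 0$). The resulting objective value is $\sum_{i\in K} d_i \min\bigl\{\SP{c+Fp^*}{s_i,t_i} - \SP{c}{s_i,t_i},\, F\bigr\} \ge \OPTfn$, since each term dominates $\Gamma^{\flex}_{\nAdap,i}(p^*)$. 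Finally, note that your tightness example relies on adversarial tie-breaking among LP optima (concentrating the whole budget on one edge is equally LP-optimal and realizes the full value $F$); the paper's tightness instance, a directed cycle with $n$ single-path commodities, forces the spread-out solution and avoids this caveat.
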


\begin{restatable}{corollary}{restateCorLpApprox}\label{cor:lopfa-approx}
	There is a $\frac{3}{4}(1 - 1/e)$-approximation algorithm for $\LOPfa$.
\end{restatable}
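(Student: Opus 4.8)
The plan is to build the approximation algorithm for $\LOPfa$ directly on top of the one guaranteed by \cref{thm:lopfn-approx}: I run the $(1-1/\mathrm{e})$-approximation algorithm for $\LOPfn$ to obtain an inspection-probability vector $p$, and I then return this very same $p$ as a solution to $\LOPfa$. Since the leader's feasible region (probabilities $p_e \in [0,1]$ with $\sum_e p_e \leq B$) is identical for both variants, this reinterpretation is legitimate and runs in polynomial time. The only quantity that changes is the revenue: when the fixed probabilities $p$ are faced by adaptive rather than non-adaptive followers, the followers evade more cleverly and the willingness-to-pay drops, so the whole task reduces to bounding this drop.

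The key step is a \emph{per-commodity, fixed-$p$} comparison of the flexible-fare revenues, namely that for every probability vector $p$ and every commodity $i$,
\[
  \Gamma^{\flex}_{\adap, i}(p) \ \geq \ \tfrac{3}{4}\,\Gamma^{\flex}_{\nAdap, i}(p).
\]
To prove this I would replay the argument of \cref{thm:price_of_nonadaptability}, but carried out on the willingness-to-pay $\Gamma^{\flex}_{X, i}(p) = \min_{P \in \mathcal{P}_i} f_{X, p, i}(P) - \SP{c}{s_i, t_i}$ instead of on the raw follower cost. Letting $P^*$ be the $\costA$-optimal path for commodity $i$ under $p$ and writing $C = c(P^*)$, $\pi = \pi(P^*)$, $S = \SP{c}{s_i, t_i}$, and $D = C - S \geq 0$, the same two estimates used in that theorem give $\min_P f_{\adap, p, i}(P) \geq \pi C + (1-\pi)(S+F)$ and $\min_P f_{\nAdap, p, i}(P) \leq \min\{C + (1-\pi)F,\ S+F\}$. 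Subtracting $S$ from both reduces the revenue ratio to $\frac{\min\{D + (1-\pi)F,\ F\}}{\pi D + (1-\pi)F}$, whose maximum over $D \geq 0$, $\pi \in [0,1]$, $F \geq 0$ is once more $\frac{1}{1 - q + q^2} \leq \frac{4}{3}$ (with $q = 1 - \pi$, attained at $D = (1-q)F$). This is precisely the per-commodity statement that already underlies \cref{lem:opt_val2}, so in the write-up I would simply invoke that lemma's proof.

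Granting the per-commodity bound, the corollary follows by chaining inequalities. Summing over commodities weighted by the demands $d_i$ yields $\sum_{i \in K} d_i\, \Gamma^{\flex}_{\adap, i}(p) \geq \frac{3}{4} \sum_{i \in K} d_i\, \Gamma^{\flex}_{\nAdap, i}(p)$; the right-hand sum is exactly the non-adaptive objective value of $p$, which by \cref{thm:lopfn-approx} is at least $(1 - 1/\mathrm{e})\,\OPTfn$, and finally $\OPTfn \geq \OPTfa$ by \cref{lem:opt_val2}. Composing these gives $\sum_{i \in K} d_i\, \Gamma^{\flex}_{\adap, i}(p) \geq \frac{3}{4}(1 - 1/\mathrm{e})\,\OPTfa$, as claimed. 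I expect the main obstacle to be exactly the per-commodity bound: \cref{lem:opt_val2} as stated compares the two \emph{optimal} values over possibly different probability vectors, whereas the reduction needs the $3/4$ gap to hold for one fixed $p$ and commodity by commodity. The delicate point is checking that the $4/3$ worst case survives the subtraction of the shortest-path distance $S$; the inequality $D = C - S \geq 0$ is what keeps the extremal ratio pinned at $4/3$ rather than degrading.
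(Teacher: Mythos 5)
Your proposal is correct and takes essentially the same route as the paper: the paper's proof likewise reuses the LP-based probabilities $p$ from \cref{thm:lopfn-approx} and applies exactly the per-commodity, fixed-$p$ bound $\Gamma^{\flex}_{\adap,i}(p) \geq \frac{3}{4}\,\Gamma^{\flex}_{\nAdap,i}(p)$, which it isolates as a separate appendix lemma (the $4/3$ adaptivity gap after subtracting $\SP{c}{s_i,t_i}$) --- precisely the statement you re-derive and correctly identify as the delicate point not covered by \cref{lem:opt_val2} as stated. The only cosmetic difference is that you finish via $\OPTfn \geq \OPTfa$, whereas the paper finishes via the LP optimum as an upper bound on $\OPTfa$; both closings are immediate.
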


	The analysis of the algorithm given in \cref{thm:lopfn-approx} is tight. A corresponding instance is described in \cref{app:LPrelax}, along with the proofs of the preceding results. There, it is also shown that $\LPflex$ does not yield any constant approximation guarantees for the fixed-fare setting.
	
\subsubsection{Local Search Framework.}
We conclude this section by presenting a general local search framework to compute close-to-optimal solutions to the leader's maximization problem. The approach can be applied to all four model variants by using the corresponding followers' response and leader's objective function.

\paragraph{The Algorithm.}
In addition to an instance of $\LOP_X^L$, the input for the algorithm consists of a candidate subset $S \subseteq E$ of the graph edges and an initial setting of probabilities on these edges. An improving move of the local search chooses two disjoint subsets $E^+,E^- \subset S$ with $|E^+|, |E^-| \leq k$ for an input parameter $k$. The probabilities on the edges in $E^-$ are uniformly decreased up to a total decrease of $\Delta' = \min \{\Delta, \sum_{e\in E^-} p_e,  \sum_{e\in E^+} 1 -  p_e\}$, where $\Delta > 0$ is an exponentially decreasing step size. Then, the probabilities on the edges in $E^+$ are uniformly increased up to a total increase of $\Delta'$. The framework then recomputes the followers' response, and accepts the move, if the leader's profit increases or reverts it, otherwise. The algorithm terminates, if the improvement in the objective is below a given threshold for a given number of consecutive iterations.

\paragraph{Initial Solutions.} The performance of the local search framework depends crucially on the choice of candidate edges and the corresponding initial solution. We tested several methods for generating such solutions. The first is solving the LP relaxation {\LPflex} and using the support of the resulting solutions as candidate edge set. The second is computing a minimum cardinality directed multicut in the graph separating all terminal pairs, and then distributing the budget uniformly among the arcs in the cut. In addition, we also used the solutions from the MIP formulation given in~\cite{borndorfer} for the fixed price setting.

\section{Computational Study}\label{section:heuristics}
In this section, we present an extensive computational study on a broad set of realistic instances, assessing the solution quality of our local search approach and the impact of our new modelling approach compared to existing models. See \cref{sec:appendix_c} for a complete overview of our computational results and details of implementation and generation of instances.

\subsubsection{Test Instances.}
Experiments were performed on two different instance sets. The first set comprises the complete networks of the Amsterdam subway system and the Dutch railway system, as well as a subnetwork of the latter restricted to major transport hubs. The data was acquired from Dutch Railways~\cite{nlsite}. Due to privacy regulations, no real-world passenger data was available. Therefore, for each network, we generated ten instances each with $25$, $50$, $100$, and $200$ commodities by choosing pairs of vertices uniformly at random, and drawing the corresponding demands uniformly at random from the interval $[1, 50]$. Ticket prices were determined using a linearized variant of the formula used in official regulations, and travel costs (in monetary units) were calculated from actual travel times with a conversion rate of $0.132$~Euro per minute travel time; see~\cite{nlsite}. The
second instance set comprises randomly generated planar graphs exhibiting characteristics similar to those of real-world networks~\cite{randomPTN}. The graphs were generated using an approach similar to the one in~\cite{denise}. We generated ten graphs for each possible combination of $|V|, |K| \in \{25, 50, 100, 200\}$. For all instances, we tested $20$ different values of budgets from the range of $0.2$ to $25$. 
In total this yields $800$ instances for each of three real-world networks and each of the four graph size classes of the randomized set, yielding $5600$ instances in total.

\begin{table}[t]
	\centering
\scalebox{1}{
\begin{tabular}{r@{\hspace{0.4cm}}c@{\hspace{0.2cm}}c@{\hspace{0.2cm}}c@{\hspace{0.2cm}}c@{\hspace{0.4cm}}c@{\hspace{0.2cm}}c@{\hspace{0.2cm}}c@{\hspace{0.4cm}}c@{\hspace{0.2cm}}c@{\hspace{0.4cm}}c@{\hspace{0.2cm}}c}
		\hline
		&\multicolumn{4}{c}{$\LOPxn$}&\multicolumn{3}{c}{$\LOPxa$}&\multicolumn{2}{c}{$\LOPfn$}&\multicolumn{2}{c}{$\LOPfa$}\\
		graph set &best&LP&LP$^{\text{LS}}$&$\Delta_{\text{MIP}}$&best&LP&LP$^{\text{LS}}$&LP&LP$^{\text{LS}}$&LP&LP$^{\text{LS}}$  \\
		\hline
		
		nlmajor &97.4&94.8&97.3&2.09&97.4&94.1&97.2&96.4&98.0&96.4&98.0 \\
		nlcomplete &94.6&91.6&93.2&1.28&93.8&90.9&92.6&97.8&97.9&97.7&97.9 \\
		adammetro &98.2&94.7&97.7&7.09&98.2&94.7&97.8&94.7&97.8&94.7&97.8 \\
		\hline
		small &97.0&92.1&96.3&4.54&96.8&88.1&96.0&95.3&97.5&95.3&97.4 \\
medium &96.3&90.7&95.2&4.79&96.0&88.9&95.0&95.5&97.5&95.4&97.3 \\
large &95.7&90.4&95.1&4.82&95.3&88.6&94.5&96.6&98.0&96.5&97.8 \\
huge &95.6&89.1&94.1&4.03&94.6&85.2&92.7&96.5&97.9&96.4&97.7 \\
			\hline
		\end{tabular}
	}
	\caption{Average ratios between solutions and upper bounds in percent. 'best' denotes the average over the best solutions found for each instance, LP denotes the solutions found by $\LPflex$, $\text{LP}^{\text{LS}}$ denotes the solution found by performing the local search heuristic on the LP solution, and $\Delta_{\text{MIP}}$ denotes the improvement of 'best' compared to the solution found by using the MIP from~\cite{borndorfer}.\label{table:gaps}}
	\end{table}

\subsubsection{Algorithms.}
As algorithms for computing start solutions of the local search, we tested the LP relaxation (\textbf{LP}) and a minimum cardinality multi-cut computed using a standard MIP formulation (\textbf{MC}). In order to assess the impact of the more precise followers' objective function in our model as compared to existing approaches, we additionally computed the mixed integer programming solutions from~\cite{borndorfer} (\textbf{MIP}).
The follower's response in the local search procedure was computed using the exact algorithms for the respective variants presented in \cref{sec:followers}. After initial experiments for fine-tuning the parameters of the framework, it turned out that restricting to $k = 1$, i.e., probability shifts from one edge to another, is already sufficient for obtaining close-to-optimal solutions within $30$ iterations. We also set the initial step length $\Delta$ to $0.1$, decreasing it by a factor of $0.9$ in every iteration.
	
	\begin{figure}[t]
	
\begin{minipage}{1.0\textwidth}
		\scalebox{0.77}{
		\begin{tikzpicture}
			\begin{axis}[title={$\LOPxn$}, width=5.8cm,
									 height=7cm, xmin=-0.500000, xmax=5.700000, 
									 ymin=0.8, ymax=1,
									 xtick={0,1,2,3,4,5}, xticklabels={Rnd (best),Rnd (MIP), ADAM (best),ADAM (MIP), NL (best),NL (MIP)},
									 x tick label style={rotate=25, anchor=north east}
									] 
				\boxplot{0.000000}{0.964852126}{0.940193958}{0.985853399}{0.891190442}{1}{black}
				\outlier{0.000000}{0.855251044}{black}
				\outlier{0.000000}{0.86755345}{black}
				\outlier{0.000000}{0.874679659}{black}
				\outlier{0.000000}{0.8789841}{black}
				\outlier{0.000000}{0.884345169}{black}
				\boxplot{1.000000}{0.915728423}{0.879378913}{0.962419674}{0.794851273}{1}{black}
				\boxplot{2.000000}{0.98970537}{0.972172818}{1}{0.923217581}{1}{black}
				\boxplot{3.000000}{0.918227376}{0.87164736}{0.945328055}{0.821484237}{1}{black}
				\boxplot{4.000000}{0.962455798}{0.944171533}{0.991230938}{0.887145548}{1}{black}
				\boxplot{5.000000}{0.946742162}{0.920469619}{0.983809858}{0.857631512}{1}{black}
				\end{axis} 
		\end{tikzpicture}
		
		\raisebox{0.3625cm}{
		\begin{tikzpicture}
			\begin{axis}[title={best for $\LOPxa$}, width=3.7cm,
									 height=7cm, xmin=-0.500000, xmax=2.700000, 
									 ymin=0.8, ymax=1,
									 xtick={0,1,2}, xticklabels={Rnd,ADAM,NL},
									 x tick label style={rotate=35, anchor=north east}
									] 
				\boxplot{0.000000}{0.959428982}{0.938865127}{0.983370154}{0.890316931}{1}{black}
				\outlier{0.000000}{0.86999942}{black}
				\outlier{0.000000}{0.871720892}{black}
				\outlier{0.000000}{0.884345169}{black}
				\boxplot{1.000000}{0.98970537}{0.972172818}{1}{0.923217581}{1}{black}
				\boxplot{2.000000}{0.956032959}{0.937498227}{0.992181653}{0.853326881}{1}{black}
				\end{axis} 
		\end{tikzpicture}
		
		\begin{tikzpicture}
			\begin{axis}[title={LP$^{LS}$ for $\LOPfn$}, width=3.7cm,
									 height=7cm, xmin=-0.500000, xmax=2.700000, 
									 ymin=0.9, ymax=1,
									 xtick={0,1,2}, xticklabels={Rnd,ADAM,NL},
									 x tick label style={rotate=35, anchor=north east}
									] 
				\boxplot{0.000000}{0.981764124}{0.957917435}{0.994987707}{0.905975395}{1}{black}
				\boxplot{1.000000}{0.989590553}{0.958044547}{1}{0.907144236}{1}{black}
				\boxplot{2.000000}{0.983956137}{0.96382588}{0.995954492}{0.913839772}{1}{black}
				\end{axis} 
		\end{tikzpicture}
		
		\begin{tikzpicture}
			\begin{axis}[title={LP$^{LS}$ for $\LOPfa$}, width=3.7cm,
									 height=7cm, xmin=-0.500000, xmax=2.700000, 
									 ymin=0.9, ymax=1,
									 xtick={0,1,2}, xticklabels={Rnd,ADAM,NL},
									 x tick label style={rotate=35, anchor=north east}
									] 
				\boxplot{0.000000}{0.98035251}{0.9569471}{0.994726763}{0.905071931}{1}{black}
				\boxplot{1.000000}{0.989590553}{0.958044547}{1}{0.907144236}{1}{black}
				\boxplot{2.000000}{0.983789382}{0.963796763}{0.995954492}{0.912343503}{1}{black}
				\end{axis} 
		\end{tikzpicture}
		}
		} % scalebox
	\end{minipage}
	\caption{Ratio of profits to upper bounds. The diagrams show the distribution of the ratios for instances in the respective test sets. The mark inside each box denotes the median, boxes represent lower and upper quartiles and the whisker ends show the minimum and maximum, respectively, apart from possible outliers marked by a cycle. The diagrams were plotted following the suggestions in~\cite{frigge1989some}.\label{fig:boxplots}}
	\end{figure}
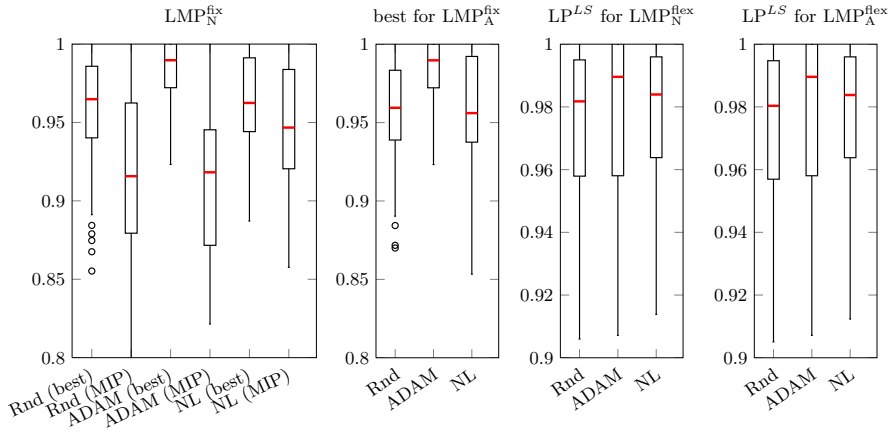

\subsubsection{Solution Quality.}
\cref{table:gaps} and \cref{fig:boxplots} show average gaps for all models on the test instances. With the exception of few instances with very high budget, results from $\text{LP}^{\text{LS}}$ consistently dominated those of $\text{MC}^{\text{LS}}$. We therefore omitted stating the results of the latter. Solutions for the fixed-fare variant are within $95\,\%$ of the upper bounds on average, while solutions for the flexible-fare variant are within $97.5\,\%$. This slight difference can be explained by the fact that the same upper bound was used for all four variants.

\subsubsection{Impact of the Budget Size.}
We also investigated the impact of budget sizes on the optimality gap and the achieved profit; see~\cref{figure:budgets} in \cref{sec:appendix_c} for a visualization.
In the range of $B\in [1,4]$, the ratio of obtained
profit and upper bound exhibits a bath tub curve behavior until it stabilizes 
at a ratio of $1$ for larger budgets (which is expected as for a large enough budget, all passengers can be forced to buy a ticket).
The achievable profit as a function of the budget is concave in all investigated examples,
which might be a universal property of the optimal
value function (as a function of the budget).

\subsubsection{Comparison with Existing Models.}
For the fixed-fare variant with non-adaptive followers, we can compare our modelling approach to that of~\cite{borndorfer}. We solved the MIP formulation proposed in~\cite{borndorfer} and computed the leader's profit resulting from the realistic response of followers (i.e., without linearization of their objective function). Comparing these solutions to the ones derived from our local search procedure, we observed that our approach yields an increase in profit of about $5\%$ on average on the randomly generated instances, about $7.5\%$ on the metro, and about $2\%$ on the railway network. In fact for some of the metro instances, the increase exceeds $20\%$; see \cref{table:gaps,fig:boxplots}.

\subsubsection{Acknowledgements.} We thank Ralf Bornd\"orfer and Elmar Swarat for giving an inspiring talk on the topic of toll enforcement that motivated our study.

\bibliographystyle{plain}
\bibliography{arxiv}

\clearpage
\appendix

\renewcommand\thelemma{\thesection\arabic{lemma}}

\section{Appendix for the Followers' Minimization Problem (\cref{sec:followers})}
\setcounter{lemma}{0}
\subsection{An FPTAS for {\fmpN} (Proof of \cref{thm:fptas})}\label{app:fptas}

\restateThmFPTAS*
\begin{proof}
	  Define $c_{\max} := \max_{e \in E} c_e$, $k := \lceil \log_{1 + \varepsilon} |V|c_{\max} \rceil$, and $C_i := (1 + \varepsilon)^i$ for $i \in [k]$. For each $i \in [k]$, we use the restricted shortest path algorithm by Hassin~\cite{Hassin:1992} to compute a path $P_i$ with $c(P_i) \leq (1 + \varepsilon)C_i$ and $\pi(P_i) \geq \pi(P)$ for all $P$ with $c(P) \leq C_i$.
	  Choose a path $P'$ from $P_0, \dots, P_k$ minimizing $\costN(P')$ among all $s_i$-$t_i$-paths of the graph. Let $P^*$ be an $s$-$t$-path minimizing $\costN$ and define $i^* := \min \{i \in [k] \, : \, c(P^*) \leq C_{i}\}$. Then $c(P_{i^*}) \leq (1 + \varepsilon)^2 c(P^*)$ and $\pi(P_{i^*}) \geq \pi(P^*)$ and thus $\costN(P') \leq \costN(P_{i^*}) \leq (1+\varepsilon)^2 \costN(P^*)$. \qed
	\end{proof}

\subsection{An Exact Polynomial Time Algorithm for {\fmpN} on Series-Parallel Graphs (Proof of \cref{thm:sepa})}\label{app:sepa-alg}

Our algorithm makes use of the inductive definition of a series-parallel graph as the series or parallel composition of two smaller series-parallel graphs. It recursively decomposes the graph and computes for each of the two smaller graphs a set of paths together with corresponding intervals. These paths and intervals will fulfil the following property: If an optimal path $P^*$ traverses the subgraph $G'$, then $\pi(P^*)$ must be in one of the intervals of the subgraph and the subpath of $P^*$ in $G'$ must have the same cost and inspection probability as the path corresponding to the interval. We can thus replace this part of $P^*$ by the corresponding path without changing the objective function value. This implies that for the original graph $G$, one of the polynomially many paths computed by the procedure must be optimal.

We start with a simple observation, which is true for {\fmpN} in any graph and will be useful in our algorithm.

\begin{lemma}\label{lem:parallel_bounds}
		Let $P_1$ and $P_2$ be two $s$-$t$-paths and let \mbox{$v, w \in V(P_1) \cap V(P_2)$}. For $i \in \{1, 2\}$ define $c_i := c(P_i[v, w])$ and $\pi_i := \pi(P_i[v, w])$. Assume $c_1 > c_2$ and $\pi_1 > \pi_2$. 
		\begin{itemize} 
			\item If $P_1$ minimizes $\costN$ among all $s$-$t$-paths, then 
				$\pi(P_1) \geq \dfrac{c_1 - c_2}{\pi_1 - \pi_2} \cdot \dfrac{\pi_1}{F}$.
			\item If $P_2$ minimizes $\costN$ among all $s$-$t$-paths, then 
				$\dfrac{c_1 - c_2}{\pi_1 - \pi_2} \cdot \dfrac{\pi_2}{F} \geq \pi(P_2)$.
		\end{itemize}
	\end{lemma}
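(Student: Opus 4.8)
The plan is to prove both bullets by a single \emph{exchange argument}: in each case I replace the $v$-$w$-portion of the path that is assumed to be optimal by the $v$-$w$-portion of the other path, and then read off the claimed bound from the resulting optimality inequality. Throughout I use the defining identity $\costN(P) = c(P) + (1 - \pi(P))F$ and the multiplicativity $\pi(P) = \pi(P[s,v]) \cdot \pi(P[v,w]) \cdot \pi(P[w,t])$ together with additivity of $c$.

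For the first bullet, I would write $P_1$ as the concatenation of its prefix $P_1[s,v]$, its middle $P_1[v,w]$, and its suffix $P_1[w,t]$, and set $\alpha := \pi(P_1[s,v])$ and $\beta := \pi(P_1[w,t])$, so that $\pi(P_1) = \alpha \pi_1 \beta$. Splicing in the middle piece of $P_2$ produces an $s$-$t$-path $P_1'$ with the same prefix and suffix, hence $c(P_1') = c(P_1) - c_1 + c_2$ and $\pi(P_1') = \alpha \pi_2 \beta$. Substituting into $\costN$ and invoking $\costN(P_1) \leq \costN(P_1')$, the prefix and suffix contributions cancel and the inequality collapses to $c_1 - c_2 \leq \alpha\beta(\pi_1 - \pi_2)F$. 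Since $\pi_1 > \pi_2$ the factor $(\pi_1 - \pi_2)F$ is positive, so dividing through and using $\alpha\beta = \pi(P_1)/\pi_1$ gives precisely $\pi(P_1) \geq \tfrac{c_1 - c_2}{\pi_1 - \pi_2}\cdot\tfrac{\pi_1}{F}$.

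The second bullet is the mirror image. Now $P_2$ is optimal, and I splice the middle piece $P_1[v,w]$ into $P_2$, obtaining $P_2'$ with the prefix and suffix of $P_2$, say with no-inspection probabilities $\alpha'$ and $\beta'$. The optimality inequality $\costN(P_2) \leq \costN(P_2')$ reduces after the same cancellation to $c_1 - c_2 \geq \alpha'\beta'(\pi_1 - \pi_2)F$, and dividing by $(\pi_1 - \pi_2)F$ together with $\alpha'\beta' = \pi(P_2)/\pi_2$ yields $\tfrac{c_1 - c_2}{\pi_1 - \pi_2}\cdot\tfrac{\pi_2}{F} \geq \pi(P_2)$. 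No step here is computationally hard; the two points that require care are (i) verifying that the spliced object is a genuine $s$-$t$-path, which is automatic in the series-parallel application since $v$ and $w$ are the terminals of a parallel component and the middle pieces are internally disjoint from the common prefix and suffix, and (ii) tracking the direction of the optimality inequality, which flips between the two bullets because swapping lowers the middle cost in the first case and raises it in the second.
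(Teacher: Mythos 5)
Your proof is correct and follows essentially the same route as the paper: the paper also splices $P_2[v,w]$ into $P_1$ to form $P' := P_1[s,v] \circ P_2[v,w] \circ P_1[w,t]$, reads off the bound on $\pi(P_1)$ from $0 \leq \costN(P') - \costN(P_1)$, and notes the second bullet follows analogously. Your explicit bookkeeping with $\alpha, \beta$ and your remark on why the spliced object is a valid path in the series-parallel application are just slightly more detailed renderings of the same exchange argument.
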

	
	\begin{proof}
		Consider the path $P' := P_1[s, v] \circ P_2[v, w] \circ P_1[w, t]$. As $P_1$ minimizes $\costN$ we have
	\[0 \leq \costN(P') - \costN(P_1) = c_2 - c_1 + (1 - \frac{\pi_2}{\pi_1}) \pi(P_1) F.\]
	This immediately implies the claimed lower bound on $\pi(P_1)$. The upper bound for $\pi(P_2)$ follows analogously.\qed
	\end{proof}
	
\begin{algorithm}
		\caption{Algorithm for {\fmpN} on Series-parallel Graphs\label{alg:fmpN}}
		$(\mathcal{P}, \ell, u) \leftarrow \procSP(G, s, t, c, p, F)$\\
		Choose $P' \in \operatorname{argmin}_{P \in \mathcal{P}} \costN(P)$.\\
		\Return $P'$\\
		\begin{algorithmic}
			\Procedure{\procSP}{$G = (V, E), s, t, c, p, F$}
				\If{$E = \{(s, t)\}$}
					\State $P \leftarrow (s, t)$; $\ell(P) = 0$; $u(P) = 1$
					\State $\mathcal{P} = \{P\}$
				\ElsIf{$G$ is series composition of $G_1$ and $G_2$ joined at $v$}
					\State $(\mathcal{P}_1, \ell_1, u_1) \leftarrow \procSP(G_1, s, v, c, p, F)$
					\State $(\mathcal{P}_2, \ell_2, u_2) \leftarrow \procSP(G_2, v, t, c, p, F)$
					\State $\mathcal{P} \leftarrow \emptyset$
					\ForEach{$P_1 \in \mathcal{P}_1, P_2 \in \mathcal{P}_2$ with $[\ell_1(P_1), u_1(P_1)] \cap [\ell_2(P_2), u_2(P_2)] \neq \emptyset$}
						\State $P \leftarrow P_1 \circ P_2$
						\State $\ell(P) \leftarrow \max \{\ell_1(P_1), \ell_2(P_2)\}$
						\State $u(P) \leftarrow \min \{u_1(P_1), u_2(P_2)\}$
						\State $\mathcal{P} \leftarrow \mathcal{P} \cup \{P\}$
					\EndFor
				\ElsIf{$G$ is parallel composition of $G_1$ and $G_2$}
					\State $(\mathcal{P}_1, \ell_1, u_1) \leftarrow \procSP(G_1, s, t, c, p, F)$
					\State $(\mathcal{P}_2, \ell_2, u_2) \leftarrow \procSP(G_2, s, t, c, p, F)$
					\State $\mathcal{P} \leftarrow \mathcal{P}_1 \cup \mathcal{P}_2$
					\ForEach{$P \in \mathcal{P}$}
						\State $\ell(P) \leftarrow 0$; $u(P) \leftarrow 1$
						\ForEach{$P' \in \mathcal{P} \setminus \{P\}$}
							\If{$c(P) \leq c(P')$}
								\If{$\pi(P) \geq \pi(P')$}
									\State $\mathcal{P} \leftarrow \mathcal{P} \setminus \{P'\}$
								\ElsIf{$\pi(P) < \pi(P')$}
									\State $u(P) \leftarrow \min \Big\{u(P), \, \frac{c(P) - c(P')}{\pi(P) - \pi(P')} \frac{\pi(P)}{F}\Big\}$
								\EndIf
							\ElsIf{$c(P) > c(P')$}
								\If{$\pi(P) \leq \pi(P')$}
									\State $\mathcal{P} \leftarrow \mathcal{P} \setminus \{P\}$
								\ElsIf{$\pi(P) > \pi(P')$}
									\State $\ell(P) \leftarrow \max \Big\{\ell(P), \, \frac{c(P) - c(P')}{\pi(P) - \pi(P')} \frac{\pi(P)}{F}\Big\}$
								\EndIf
							\EndIf
						\EndFor
					\EndFor
				\EndIf\\
				\hspace{0.57cm}\Return $(\mathcal{P}, \ell, u)$
			\EndProcedure
		\end{algorithmic}
	\end{algorithm}
	
	The formal listing of \cref{alg:fmpN} comprises the procedure {\procSP}, which computes the intervals and corresponding paths by performing the series-parallel decomposition. Corresponding to this decomposition, it distinguishes three cases, which are described and analyzed in the proof of the following lemma.
		
	\begin{lemma}\label{lem:series-parallel}
	  Let $G' = (V', E')$ be a series-parallel subgraph of $G$ with start vertex $s'$ and end vertex $t'$. The set of paths $\mathcal{P}$ and corresponding interval bounds $\ell$ and $u$ computed by \procSP$(G', s', t', c, p, F)$ fulfill the following properties. 
	  \begin{itemize}
	  	\item $|\mathcal{P}| \leq |E'|$ 
	  	\item $[\ell(P), u(P)] \cap [\ell(P'), u(P')] = \emptyset$ for $P \neq P'$
	  	\item If $P^*$ is an $s$-$t$-path in $G$ minimizing $\costN$ with $s', t' \in V(P^*)$ and $P^*[s', t'] \subseteq E'$, then there is an $s'$-$t'$-path $P \in \mathcal{P}$ with $\pi(P^*) \in [\ell(P), u(P)]$, $c(P^*[s', t']) = c(P)$, and $\pi(P^*[s', t']) = \pi(P)$.
	  \end{itemize}
	\end{lemma}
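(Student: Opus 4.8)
The plan is to prove all three properties simultaneously by structural induction on the series-parallel decomposition of $G'$, following the three branches of \procSP. The base case $E' = \{(s',t')\}$ is immediate: the unique path is the single edge with interval $[0,1]$, and any global optimum $P^*$ with $P^*[s',t'] \subseteq E'$ must traverse exactly this edge, so its cost, probability, and the (trivial) interval membership all match.

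For the series case, $v$ is a cut vertex, so every $s'$-$t'$ path — in particular $P^*[s',t']$ — splits as $P^*[s',v] \circ P^*[v,t']$ with $P^*[s',v] \subseteq E_1$ and $P^*[v,t'] \subseteq E_2$. I would apply the induction hypothesis to $G_1$ and $G_2$ to get $P_1 \in \mathcal{P}_1$, $P_2 \in \mathcal{P}_2$ matching the cost and probability of the two halves, each containing $\pi(P^*)$ in its interval. Their intervals then intersect, so \procSP\ forms $P = P_1 \circ P_2$ with $[\ell(P),u(P)] = [\max\{\ell_1,\ell_2\},\min\{u_1,u_2\}] \ni \pi(P^*)$, and $c(P) = c(P^*[s',t'])$, $\pi(P) = \pi(P^*[s',t'])$ by additivity of $c$ and multiplicativity of $\pi$, giving property three. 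Disjointness is inherited, since any point lies in at most one $\mathcal{P}_1$-interval and one $\mathcal{P}_2$-interval, hence in at most one intersection. For the count I would map each nonempty intersection to its right endpoint $\min\{u_1(P_1),u_2(P_2)\}$; disjointness of each family makes this map injective into the $u_1$- and $u_2$-values, so $|\mathcal{P}| \leq |\mathcal{P}_1| + |\mathcal{P}_2| \leq |E_1| + |E_2| = |E'|$.

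For the parallel case, $G_1$ and $G_2$ share only $s',t'$, so $Q := P^*[s',t']$ lies entirely in one branch, say $E_1$, and the induction hypothesis gives $P_1 \in \mathcal{P}_1 \subseteq \mathcal{P}$ with $c(P_1)=c(Q)$, $\pi(P_1)=\pi(Q)$. The decisive observation is that $Q$ is Pareto-optimal in $G'$: replacing $Q$ by a strictly dominating $s'$-$t'$ path $P'$ produces $\tilde P := P^*[s,s'] \circ P' \circ P^*[t',t]$ with $\costN(\tilde P) - \costN(P^*) = (c(P') - c(Q)) + \pi(P^*)\bigl(1 - \pi(P')/\pi(Q)\bigr)F < 0$, contradicting optimality of $P^*$. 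Hence a representative $P$ with $c(P)=c(Q)$, $\pi(P)=\pi(Q)$ survives the domination pruning, and I would read its interval bounds directly off \cref{lem:parallel_bounds}: for every surviving $P'$ cheaper and riskier than $Q$ the lemma (with $P^*$ as minimizer) gives $\pi(P^*) \geq \frac{c(Q)-c(P')}{\pi(Q)-\pi(P')}\frac{\pi(Q)}{F} = \ell(P)$, and symmetrically more expensive, safer $P'$ give $\pi(P^*) \leq u(P)$. Disjointness follows because for $c(P)<c(P')$, $\pi(P)<\pi(P')$ the upper threshold imposed on $P$ is $\frac{c(P')-c(P)}{\pi(P')-\pi(P)}\frac{\pi(P)}{F}$ while the lower threshold imposed on $P'$ is the same ratio scaled by $\pi(P')/\pi(P) > 1$, forcing $u(P) < \ell(P')$; and $|\mathcal{P}| \leq |\mathcal{P}_1| + |\mathcal{P}_2| \leq |E'|$ since pruning only removes paths.

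The hard part will be the parallel case, and specifically the correct handling of ties: paths with identical $(c,\pi)$, or the degenerate situations $F = 0$ and $\pi(P^*) = 0$, where the strict-domination argument collapses and the in-place deletions of \procSP\ make the surviving representative depend on iteration order. I would isolate this by assuming $F > 0$ (the case $F=0$ reduces to a plain shortest-path computation) and showing that the Pareto frontier of $\mathcal{P}_1 \cup \mathcal{P}_2$ with respect to $(c,\pi)$ retains at least one path of every attained $(c,\pi)$-value, so that a witness with the required cost and probability always remains; the quantitative interval bookkeeping then reduces to the routine threshold comparisons above via \cref{lem:parallel_bounds}.
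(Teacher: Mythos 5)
Your proof follows essentially the same route as the paper's: structural induction along the three branches of \procSP{}, the interval-intersection and endpoint-counting argument in the series case (charging intersections to right rather than left endpoints is immaterial), and, in the parallel case, the subpath-swap argument plus \cref{lem:parallel_bounds} to obtain the interval bounds and their pairwise disjointness. On this main line your argument is correct and matches the paper step for step.

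Where you diverge is the treatment of degeneracies, and your instinct that something is delicate there is sound --- in fact sounder than your proposed repair. The paper resolves this exactly by the WLOG you describe, asserting that if the witness $P$ is pruned by some $P'$ with $c(P')\leq c(P)$ and $\pi(P')\geq\pi(P)$, then optimality of $P^*$ forces $c(P')=c(P)$ and $\pi(P')=\pi(P)$. But that assertion, like your computation $\costN(\tilde P)-\costN(P^*)=(c(P')-c(Q))+\pi(P^*)\bigl(1-\pi(P')/\pi(Q)\bigr)F$, turns weak domination into a contradiction only when $F>0$ \emph{and} the prefix/suffix probability $\pi(P^*[s,s'])\,\pi(P^*[t',t])$ is strictly positive. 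Your fix --- that the Pareto frontier keeps one representative of every attained $(c,\pi)$-value --- handles exact ties only; it cannot handle a \emph{strictly} dominated subpath, and that is precisely what an optimal $P^*$ may use when its prefix or suffix has probability zero. Concretely: let the edge $(s,v)$ have cost $0$ and $p=1$, followed by two parallel edges from $v$ to $t$, both of cost $1$, with probabilities $0$ and $1/2$; every $s$-$t$-path has $\costN=1+F$, so the path through the second parallel edge is optimal, yet \procSP{} prunes that edge, and no surviving path matches its $(c,\pi)$-value $(1,1/2)$. In this regime the third bullet is false as stated, so no proof (yours, or the paper's, which silently makes the same unjustified step) can establish it. The damage is contained: the dominating swap never increases $\costN$, so \emph{some} optimal path is always matched, which is all that \cref{thm:sepa} needs. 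The honest repair is therefore to weaken the third bullet to require the exact $(c,\pi)$ match only for some optimal $P^*$ (equivalently, to conclude only that replacing $P^*[s',t']$ by $P$ preserves optimality), or to assume $p_e<1$ for all $e$; with either adjustment both your argument and the paper's go through verbatim.
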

	
	\begin{proof}
		We prove the lemma by induction on the algorithm. 
		\begin{itemize}
			\item Case 1: $G'$ is only the single edge $(s', t')$. In this case, {\procSP} returns the corresponding single-edge path and the statement of the lemma is trivially fulfilled.
			\item Case 2: $G'$ is the series composition of two series-parallel graphs $G_1$ and $G_2$ joined at a vertex $v$. Let $(\mathcal{P}_1, \ell_1, u_1)$ and $(\mathcal{P}_2, \ell_2, u_2)$ be the sets of paths and interval bounds computed by {\procSP} for $G_1$ and $G_2$, respectively. Every path $P \in \mathcal{P}$ in the set computed for $G'$ corresponds to the concatenation of two paths $P_1 \in \mathcal{P}_1$ and $P_2 \in \mathcal{P}_2$ such that the corresponding intervals $[\ell_1(P_1), u_1(P_1)]$ and $[\ell_2(P_2), u_2(P_2)]$ intersect. Every intersection starts either with a value $\ell_1(P_1)$ or a value $\ell_2(P_2)$. As the intervals defined by $\ell_1$ and $u_1$ ($\ell_2$ and $u_2$, respectively) are disjoint, every value of $\ell_2(P_2)$ (of $\ell_1(P_1)$, respectively) occurs at most in one intersection. Therefore $|\mathcal{P}| \leq |\mathcal{P}_1| + |\mathcal{P}_2| \leq |E'|$. 
			
			If $P^*$ traverses $G'$, it also has to traverse both $G_1$ and $G_2$. 
			By induction hypothesis, there is an $s'$-$v$-path $P_1 \in \mathcal{P}_1$ such that $P^*[s', v] \in [\ell_1(P_1), u_1(P_1)]$, $c(P^*[s', v]) = c(P_1)$, and $\pi(P^*[s', v]) = \pi(P_1)$. Furthermore, there is a $v$-$t'$-path $P_2 \in \mathcal{P}_2$ such that $P^*[v, t'] \in [\ell_2(P_2), u_2(P_2)]$, $c(P^*[v, t']) = c(P_2)$, and $\pi(P^*[v, t']) = \pi(P_2)$. This implies $[\ell_1(P_1), u_1(P_1)] \cap [\ell_2(P_2), u_2(P_2)] \neq \emptyset$ and therefore $P := P_1 \circ P_2$ is in $\mathcal{P}$ and fulfills $$c(P^*[s', t']) = c(P^*[s', v]) + c(P^*[v, t']) = c(P)$$ and $$\pi(P^*[s', t']) = \pi(P^*[s', v]) \cdot \pi(P^*[v, t']) = \pi(P).$$ Therefore, the statement of the lemma is true in this case.
			\item Case 3: $G'$ is the parallel composition of two series-parallel graphs $G_1$ and $G_2$. Let $(\mathcal{P}_1, \ell_1, u_1)$ and $(\mathcal{P}_2, \ell_2, u_2)$ be the sets of paths and interval bounds computed by {\procSP} for $G_1$ and $G_2$, respectively. The set $\mathcal{}P$ computed for $G'$ consists of a subset of the paths from $\mathcal{P}_1$ and $\mathcal{P}_2$. Therefore $|\mathcal{P}| \leq |\mathcal{P}_1| + |\mathcal{P}_2| \leq |E'|$. Let $P, P' \in \mathcal{P}$. Without loss of generality, we can assume $c(P) > c(P')$ and $\pi(P) > \pi(P')$, as otherwise either $P$ or $P'$ were removed from $\mathcal{P}$. Therefore 
			\[\ell(P) \geq \frac{c(P) - c(P')}{\pi(P) - \pi(P')} \frac{\pi(P)}{F} > \frac{c(P) - c(P')}{\pi(P) - \pi(P')} \frac{\pi(P')}{F} \geq u(P')\]
	by construction of $u$ and $\ell$. The corresponding intervals of $P$ and $P'$ are thus disjoint.
			
			 If $P^*$ is an $s$-$t$-path minimizing $\costN$ that traverses $G'$, then it either traverses $G_1$ or it traverses $G_2$. There thus is a path $P \in \mathcal{P}_1 \cup \mathcal{P}_2$ with $c(P) = c(P^*)$ and $\pi(P) = \pi(P^*)$. Note that $P$ is only removed from $\mathcal{P}$ if there is a path $P'$ with $c(P') \leq c(P)$ and $\pi(P') \geq \pi(P)$. By optimality of $P^*$, we must have $c(P') = c(P)$ and $\pi(P') = \pi(P)$. Thus, we can assume $P \in \mathcal{P}$ without loss of generality. It remains to show $\pi(P^*) \in [l(P), u(P)]$.
			  If $\ell(P) > 0$, then there is an $s'$-$t'$-path $P'$ in $G'$ such that $c(P) > c(P')$ and $\pi(P) > \pi(P')$ and $\ell(P) = \frac{c(P) - c(P')}{\pi(P) - \pi(P')} \frac{\pi(P)}{F}$. Thus, \cref{lem:parallel_bounds} implies $\pi(P^*) = \pi(P) \geq \ell(P)$. Analogously, $\pi(P^*) \leq u(P)$. \qed
		\end{itemize}
	\end{proof}
	
	\cref{lem:series-parallel} directly implies the optimality of the path computed by \cref{alg:fmpN} as well as the polynomiality of the algorithm, concluding the proof of \cref{thm:sepa}.
	
	\subsection{An Exact Algorithm for {\fmpN} on General Graphs}\label{app:exact-fmpN}

	\begin{algorithm}[t]
		\begin{algorithmic}
			\caption{Algorithm for {\fmpN} on General Graphs\label{alg:fmpNP}}
			\State $\mathcal{P} \leftarrow \emptyset$
			\State $S \leftarrow \{(s,P_0)\}$ \Comment $P_0$ is an empty $s$-$s$-path

			\While{$S \neq \emptyset$}
				\State Choose $(v,P) \in S$ minimizing $c(P)+F\cdot\left(1-\pi(P)\right)$.
				\State $S \leftarrow S \setminus (v, P)$
				\ForEach{$e=(v,w) \in \delta^+(v)$}
					\State $P' \leftarrow P \circ (e)$
					\State $c(P') \leftarrow c(P)+c_{e}$
					\State $\pi(P') \leftarrow \pi(P)\cdot (1-p_e)$
					\If{$c(P') < c(P^*)$ \textbf{or} $\pi(P') < \pi(P^*)$ for all $(w,P^*)\in S$}
						\If{$w = t$}
							\State $\mathcal{P} \leftarrow \mathcal{P} \cup P'$
						\Else
							\State $S \leftarrow S \cup \{(w,P')\}$
						\EndIf
					\EndIf
				\EndFor
			\EndWhile\\		
			\Return $P \in \mathcal{P}$ minimizing $c(P)+F\cdot\left(1-\pi(P)\right)$
		\end{algorithmic}
	\end{algorithm}
	
	We give a complete description of the exact (non-polynomial) algorithm for computing optimal solutions to {\fmpN} on general graphs. Our algorithm is based on enumerating all Pareto-optimal paths. A path $P$ is \emph{dominated} by a path $P'$ if both $c(P) \geq c(P')$ and $\pi(P) \leq \pi(P')$. For a vertex $v \in V$, a set of $s$-$v$-paths $\mathcal{P}$ is a \emph{dominating set}, if for every $s$-$v$-path $P$ in $G$ there is a path $P' \in \mathcal{P}$ that dominates $P$. Note that if $P'$ dominates $P$ then $\costN(P') \leq \costN(P)$. Therefore, a dominating set of $s$-$t$-paths contains an $s$-$t$-path minimizing $\costN$. The algorithm uses bidirectional label setting to a compute a dominating set for every vertex $v \in V$. It then returns an $s$-$t$-path which minimizes $\costN$. 
	
	As an important pre-processing step for the algorithm, we used \emph{contraction hierarchies}~\cite{GeisbergerTS}. Vertices in the graph are \emph{contracted} in a certain order, removing them while preserving the shortest paths distances between remaining vertices by adding \emph{shortcuts}, i.e. an edge $e=(u,w)$ is added when contracting a vertex $v$, if $u$ and $w$ are adjacent to $v$ and the shortest path from $u$ to $w$ is $((u,v),(v, w))$. Newly created shortcuts are added to the original graph. The sparseness of the resulting graph and the performance of the preprocessing depend crucially on the criteria used to determine the contraction order. In our setting an order that prioritizes contractions causing a large number of shortcuts turned out to be particularly efficient.

\subsection{An Exact Algorithm for {\fmpA} (Proofs of \cref{obs:adaptive_cost,thm:adaptive_follower})}\label{app:exact-fmpA}
		
		We give a complete description of the exact polynomial time algorithm for {\fmpA}. We start with the proof of \cref{obs:adaptive_cost}.
		
\restateLemAdaptiveCost*
	 \begin{proof}
  	Let $P = (e_1, \dots, e_k)$ with $e_i = (v_i, v_{i + 1})$ and assume $v = v_{k'}$ for some $k'$ with $1 \leq k' \leq k+1$. Observe that
  	\begin{align*}
  		\costA(P) \ = \ & \sum_{i = 1}^{k} \prod_{j = 1}^{i-1} \big(1 - p_{e_j}\big) \cdot \Big(c_{e_i} +  p_{e_i} \big(F + \SP{c}{v_{i+1}, t}\big)\Big)\\
  		= \ & \sum_{i = 1}^{k' - 1} \prod_{j = 1}^{i-1} \big(1 - p_{e_j}\big) \cdot \Big(c_{e_i} +  p_{e_i} \big(F + \SP{c}{v_{i+1}, t}\big)\Big)\\
  		& + \prod_{j = 1}^{k' - 1} \big(1 - p_{e_j}\big) \sum_{i = k'}^{k} \prod_{j = k'}^{i-1} \big(1 - p_{e_j}\big) \cdot \Big(c_{e_i} +  p_{e_i} \big(F + \SP{c}{v_{i+1}, t}\big)\Big)\\
  		= \ & \costA(P[s, v]) \ + \prod_{e \in P[s, v]} \!\!\! (1 - p_e) \cdot \costA(P[v, t]).\hspace{4.1cm}\qed
  	\end{align*}
  \end{proof}
  
  \begin{algorithm}
	\begin{algorithmic}
		\caption{Algorithm for {\fmpA}}\label{alg:fmpA}
		\LState $\phi(t) \leftarrow 0$
		\LState $\phi(v) \leftarrow \infty$ for all $v \in V \setminus \{t\}$
		\LState $S \leftarrow \emptyset$
		\AlgWhile{$s \notin S$}
		  \LState Choose $w \in V \setminus S$ minimizing $\phi$.
		  \LState $S \leftarrow S \cup \{w\}$
		  \AlgForEach{$e = (v, w) \in \delta^{-}(w)$}
		    \LState $\phi' \leftarrow c_e + p_e(\SP{c}{w} + F) + (1 - p_e) \phi(w)$
		    \AlgIf{$\phi' < \phi(v)$}
		      \LState $\phi(v) \leftarrow \phi'$
		      \LState $\operatorname{next}(v) \leftarrow e$
		    \AlgEndIf
		  \AlgEndFor
		\AlgEndWhile
		\LState $w \leftarrow s$; $P \leftarrow \emptyset$
		\AlgWhile{$w \neq t$}
		  \LState $e \leftarrow \operatorname{next}(w)$; $w \leftarrow \operatorname{head}(e)$
		  \LState Add $e$ to $P$.
		\AlgEndWhile\\
		\Return $P$
		\end{algorithmic}
	\end{algorithm}
		
	In the spirit of Dijkstra's shortest path algorithm~\cite{dijkstra1959note}, the algorithm iteratively computes the cost of an optimal $v$-$t$-path for some vertex $v$. It maintains the set $S$ of vertices for which an optimal path has been computed and a value $\phi(v)$ for every vertex $v \in V$, denoting the cost of the cheapest $v$-$t$-path found so far. In every iteration, a vertex $w \in V \setminus S$ with minimum value $\phi(w)$ is added to $S$ and the labels $\phi(v)$ of vertices $v$ with $(v, w) \in E$ are updated if a $v$-$t$-path consisting of $(v, w)$ and an optimal $w$-$t$-path is cheaper than the current value of $\phi(v)$. A complete listing is given as \cref{alg:fmpA}.
		
	\begin{lemma}\label{lem:fmpA-algorihm}
	  When vertex $v \in V$ is added to the set $S$ in \cref{alg:fmpA}, then there is an optimal $v$-$t$-path $P_v$ with $\costA(P_v) = \phi(v)$ starting with the edge $\operatorname{next}(v)$.
	\end{lemma}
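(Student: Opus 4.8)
The plan is to prove the lemma by induction on the order in which vertices enter $S$, carrying the stronger invariant that \emph{every} vertex $w$ currently in $S$ satisfies $\phi(w) = \optA(w)$ and that $\operatorname{next}(w)$ is the first edge of some optimal $w$-$t$-path; here $\optA(w)$ denotes the minimum of $\costA$ over all $w$-$t$-paths. The base case is $w = t$, extracted first, where $\phi(t) = 0 = \optA(t)$ is witnessed by the empty path.

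Before the induction I would record two preliminary facts. First, $\optA(u) \le \SP{c}{u,t} + F$ for every $u$: evaluating $\costA$ on a shortest $u$-$t$-path and telescoping the cost terms against the suffix shortest-path distances (via the decomposition in \cref{obs:adaptive_cost}) gives a value of $\SP{c}{u,t} + (1-\pi)\,F \le \SP{c}{u,t} + F$. Second — and this replaces the nonnegativity of edge weights that drives ordinary Dijkstra — prepending an edge to an \emph{optimal} suffix does not decrease the cost: if $R$ is an optimal $u'$-$t$-path and $e' = (u,u')$, then by \cref{obs:adaptive_cost},
\[\costA(e' \circ R) - \costA(R) \ = \ c_{e'} + p_{e'}\big(F + \SP{c}{u',t} - \costA(R)\big) \ \ge \ c_{e'} \ \ge \ 0,\]
since $\costA(R) = \optA(u') \le \SP{c}{u',t} + F$ by the first fact. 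Because the corollary to \cref{obs:adaptive_cost} guarantees that every suffix of an optimal path is again optimal, this monotonicity applies edge by edge along any optimal path, so the cost of an optimal path is at least the cost of any of its suffixes.

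For the inductive step, let $v$ minimize $\phi$ over $V \setminus S$ at extraction. The easy direction $\phi(v) \ge \optA(v)$ holds because the relaxation that last set $\phi(v)$ used an already-processed vertex $w$ with $\phi(w) = \optA(w)$; by \cref{obs:adaptive_cost} the resulting value equals $\costA$ of the genuine $v$-$t$-path $\operatorname{next}(v) \circ P_w$ for an optimal $w$-$t$-path $P_w$, hence is at least $\optA(v)$. For the reverse direction I would run the frontier argument: fix an optimal $P^*$ from $v$ to $t$ and let $(x,y)$ be its first edge with $y \in S$, so $x \in V \setminus S$. Relaxation of $(x,y)$ at the moment $y$ entered $S$, together with $\costA(P^*[y,t]) \ge \optA(y) = \phi(y)$ and \cref{obs:adaptive_cost} (applied at $y$, where $P^*[x,y]$ is the single edge $(x,y)$), yields $\costA(P^*[x,t]) \ge \phi(x)$. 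Applying the monotonicity of the second preliminary fact along the prefix of $P^*$ from $v$ to $x$ then gives
\[\optA(v) \ = \ \costA(P^*) \ \ge \ \costA(P^*[x,t]) \ \ge \ \phi(x) \ \ge \ \phi(v),\]
where the last step uses that $x \in V \setminus S$ and $v$ minimizes $\phi$ there. Hence $\phi(v) = \optA(v)$, and the edge $\operatorname{next}(v)$ recorded together with this value begins an optimal $v$-$t$-path, which is exactly the claim.

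The main obstacle is precisely the second preliminary fact: unlike in Dijkstra's algorithm, the multiplicative survival factor $\pi$ inside $\costA$ means that extending a path could a priori lower its cost, so monotonicity is not automatic. The resolution is to combine the global bound $\optA(u) \le \SP{c}{u,t} + F$ with the optimal-substructure corollary of \cref{obs:adaptive_cost}, which confines the monotonicity claim to the optimal suffixes that actually arise along $P^*$ — exactly the paths the frontier argument feeds it.
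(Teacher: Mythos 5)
Your proof is correct, and while it shares the paper's overall skeleton---induction on the order in which vertices enter $S$ combined with a frontier (relaxation) argument, which the paper phrases as a first-violating-vertex contradiction---the way you establish the crucial inequality $\optA(v) \ge \phi(v)$ is genuinely different. The paper lower-bounds the cost of an \emph{arbitrary} $v$-$t$-path $P'$: it takes the last edge $e'=(v',w')$ of $P'$ crossing from $V \setminus S$ into $S$ and splits $\costA(P')$ probabilistically into the event of being caught before reaching $v'$, whose conditional cost is at least $\SP{c}{v,t}+F$ by the triangle inequality, and the event of reaching $v'$ uncaught, whose conditional cost is at least $\phi(v') \ge \phi(v)$; this forces an extra preparatory step, namely proving $\SP{c}{v,t}+F \ge \phi(v)$ by first running the frontier bound on a shortest $v$-$t$-path. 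You avoid the probabilistic split entirely by restricting attention to an optimal path $P^*$: the corollary to \cref{obs:adaptive_cost} (optimal substructure, which the paper states but never actually uses inside its own proof of this lemma) makes every suffix of $P^*$ optimal, and your prepend-monotonicity fact---$\costA(e' \circ R) \ge \costA(R)$ whenever $R$ is an optimal suffix, which is exactly where the bound $\optA(u') \le \SP{c}{u',t}+F$ enters---then yields $\costA(P^*) \ge \costA(P^*[x,t]) \ge \phi(x) \ge \phi(v)$ at the first crossing edge. Both arguments thus rest on the same structural fact $\optA(u) \le \SP{c}{u,t}+F$ (the paper invokes it as $\phi(w') \le \SP{c}{w',t}+F$), but your monotonicity lemma isolates more cleanly the surrogate for non-negative edge weights that makes the Dijkstra paradigm sound in this setting, whereas the paper's route gives the slightly stronger (though unneeded) conclusion that \emph{every} $v$-$t$-path, optimal or not, has adaptive cost at least $\phi(v)$.
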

	
	\begin{proof}
	  By contradiction assume the lemma is not true. Without loss of generality let $v$ be the first vertex added to $S$ that does not fulfill the statement of the lemma and  consider the moment when $v$ is added to $S$. Note that $v \neq t$, as $t$ fulfills the requirements with $\phi(t) = 0$ and $P_t = \emptyset$. Let $\operatorname{next}(v) = e = (v, w)$ for some $w \in S$. As $v$ is the first vertex violating the lemma, there is a $w$-$t$-path $P_w$ with $\costA(P_w) = \phi(w)$. Thus, $P_v := (e) \circ P_w$ is a $v$-$t$-path with cost 
	  \[\costA(P_v) = c_e + p_e (\SP{c}{w,t} + F) + (1 - c_e) \phi(w) = \phi(v).\]
	  Let $P'$ be any $v$-$t$-path, let $e' = (v', w')$ be the last edge on $P'$ with $v' \in V \setminus S$ and $w' \in S$. Note that 
	  \[c(P') + F \geq c_{e'} + p_{e'}) (\SP{c}{w',t} + F) + (1 - c_{e'})\phi(w') \geq \phi(v')\] as $\phi(w')$ denotes the cost of an optimal $w'$-$t$-path by induction hypothesis and therefore $\phi(w') \leq \SP{c}{w',t} + F$. Further note that $\phi(v') \geq \phi(v)$ by choice of $v$. In particular, this implies $\SP{c}{v,t} + F \geq \phi(v)$ by choosing $P'$ as a shortest $v$-$t$-path with respect to $c$. Thus, for arbitrary $P'$ again,
	  \[\costA(P') \geq \big(1 - \pi(P'[v,v'])\big) \cdot \big(\SP{c}{v,t} + F\big) + \pi(P'[v,v']) \cdot \phi(v') \geq \phi(v)\]
	  which proves that $P_v$ is optimal. \qed
	\end{proof}
	
	\restateThmAdaptiveFollower*
	
	\begin{proof}
	  The optimality of the path computed by \cref{alg:fmpA} follows immediately from \cref{lem:fmpA-algorihm}.
	  Shortest path distances from every vertex to $t$ can be pre-computed using Dijkstra's algorithm in time $O(|E| + |V| \log |V|)$. The remainder of the algorithm can be implemented using a Fibonacci heap for computing the vertex minimizing $\phi(v)$, guaranteeing the claimed running time.\qed
	\end{proof}
	
\section{Appendix for the Leader's Maximization Problem (\cref{sec:leader})}
\setcounter{lemma}{0}
\subsection{Complexity (Proof of \cref{thm:hardness})}\label{app:hardness}

We reduce from the \emph{minimum directed multicut problem}. An instance of this problem is given by a directed graph $G=(V,E)$ and $k$ commodities specified by origin-destination pairs $(s_i,t_i)_{i = 1,\dots,k}$. A multicut $M \subseteq E$ is a subset of edges such that $M \cap P \not= \emptyset$ for each path $P \in \mathcal{P}_i$ and each commodity $i \in \{1,\dots,k\}$. The task is to find a multicut of minimum cardinality. The corresponding decision problem is to decide for a given graph $G$ and an integer $q$ whether $G$ has a multicut of cardinality at most $q$. The directed multicut problem is strongly NP-hard even for $k=2$; see Garg, Vazirani, Yannakakis: \emph{Multiway Cuts in Directed and Node Weighted Graphs} (ICALP~2014).

\restateThmHardness*

\begin{proof}
 Consider an instance $ I=(G, (s_i,t_i)_{i =1,\dots,k}, q)$ of the directed
  multicut problem with $G=(V, E)$, $k=2$, and $q \in \mathbb{Z}_+$.
  We will construct an instance~$\hat I=(G,c,K,F,T,B)$
  of the leader's maximization problem as follows. We introduce two commodities, one for each pair $(s_i, t_i)$ with $i \in K := \{1, 2\}$. We set $c_e=0$ for all $e\in E$,
  $B=q$ and $T_1=T_2=F=1$.
  
  We denote by $\opt^L_X$ the value of an optimal solution to $\LOP^L_X$. Note that since the travel costs are all zero, $f_{\nAdap, p, i} = f_{\adap, p, i}$ for all $i \in K$ and any setting of probabilities $p \in [0, 1]^E$. Furthermore
 $\Gamma^\fixed_{\nAdap,i}(p) = \min_{P \in \mathcal{P}_i} \{f_{N,p,i}(P),\ T_i\} = \Gamma^\flex_{\nAdap,i}(P)$ as $\SP{c}{s_i, t_i} = 0$ for all $i \in K$. Therefore $\opt^\fixed_\nAdap = \opt^\flex_\nAdap = \opt^\flex_\nAdap = \opt^\flex_{\adap}$.
 
  We claim the following equivalence proving
   the theorem.
  \[\opt^L_X \geq 2 \Leftrightarrow \text{there exists a feasible
    multicut of cardinality $q$}.\]

Suppose that there is a multicut $M$ with $|M|\leq q$.
We then define $p_e=1$ for all $e\in M$ and $p_e=0$ for all $e\in E \setminus M$.
Note that $\sum_{e \in E} p_e = |M| \leq q$, implying that $p$ is a feasible solution.
Furthermore, every passenger encounters an inspector with probability $1$ because his path has to cross the multicut.
 We thus obtain $\opt^L_X \geq 2$.

Conversely, assume $p \in [0, 1]^E$ is a solution with profit $2$. This implies that $f_{X,i,p}(P) = 1$ for $i \in K$ and every path $P \in \mathcal{P}_i$. This is only possible if for every $P \in \mathcal{P}_i$, there is an $e \in P$ with $p_e = 1$. Therefore the set $M = \{e \in E \, : \, p_e = 1\}$ is a multicut with cardinality $|M| \leq \sum_{e \in M} p_e \leq q$.\qed
\end{proof}

\subsection{The LP Relaxation (Proofs of \cref{lem:opt_val1,lem:opt_val2,lem:sum_approximation,lem:lp_upper_bound,thm:lopfn-approx,cor:lopfa-approx})}\label{app:LPrelax}

\restateLemOptFlexFix*

\begin{proof}
Let $X \in \{\adap, \nAdap\}$. Note that $(1 - \pi(P)) F \geq f_{X, p, i}(P) - \SP{c}{s_i, t_i}$ for all $p \in [0, 1]^E$, $i \in K$ and any path $P \in \paths_i$. Therefore $\OPT^{\flex}_{X} \geq \OPT^{\fixed}_{X}$.\qed
\end{proof}

For the proofs of \cref{lem:opt_val2,cor:lopfa-approx}, we will additionally use the following stronger version of \cref{thm:price_of_nonadaptability}.

	\begin{lemma}\label{lem:adaptivity_ratio_leader}
		Let $\optN$ be the cost of an optimal non-adaptive solution and $\optA$ be the cost of an optimal adaptive solution. Then $\optN - \SP{c}{s, t} \leq \frac{4}{3}(\optA - \SP{c}{s, t})$.
	\end{lemma}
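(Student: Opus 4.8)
The plan is to mirror the proof of \cref{thm:price_of_nonadaptability}, but to carry the additive term $\SP{c}{s, t}$ through the argument rather than discarding it. Write $S := \SP{c}{s, t}$ and let $P$ be a path minimizing $\costA$, so that $\optA = \costA(P)$. I would reuse the two bounds already established in that proof: the lower bound $\optA \geq \pi(P) c(P) + (1 - \pi(P))(S + F)$, justified by the fact that a caught passenger still has to cover at least a shortest $s$-$t$-path, and the upper bound $\optN \leq \min\{c(P) + (1 - \pi(P))F,\ S + F\}$, which holds because both $P$ and a $c$-shortest path are feasible non-adaptive routes.

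Subtracting $S$ from each bound and introducing the abbreviations $D := c(P) - S \geq 0$ and $p := 1 - \pi(P) \in [0, 1]$, these become $\optA - S \geq (1 - p)D + pF$ and $\optN - S \leq \min\{D + pF,\ F\}$. The claim then reduces to showing
\[
  \frac{\optN - S}{\optA - S} \ \leq \ \max_{p \in [0, 1],\ F \geq 0,\ D \geq 0} \frac{\min\{D + pF,\ F\}}{(1 - p)D + pF} \ \leq \ \frac{4}{3}.
\]
To evaluate the maximum I would fix $p$ and $F$ and optimize over $D$: on the branch $D \leq (1 - p)F$ the quotient is nondecreasing in $D$ (its derivative in $D$ has numerator $p^2 F \geq 0$), while on the branch $D \geq (1 - p)F$ it is decreasing, so the maximum is attained at the breakpoint $D + pF = F$, equivalently $c(P) + pF = S + F$, which is exactly the equalizing condition used in the original proof. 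Substituting $D = (1 - p)F$ collapses the expression to $\tfrac{1}{1 - p + p^2}$, whose maximum over $[0, 1]$ is $\tfrac{4}{3}$, attained at $p = 1/2$.

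I expect the only genuine subtlety --- and the reason the lemma is not a free corollary of \cref{thm:price_of_nonadaptability} --- to be the effect of subtracting the common term $S$. Since $\optN \geq \optA \geq S$, subtracting $S$ from the numerator and denominator of a ratio that is already at least $1$ can only \emph{increase} it, so the strengthened statement does not follow formally from the theorem and the bounding must be redone from scratch. The essential check is therefore that the particular structure of the two bounds (the shared $S + F$ in the lower bound and the bare $F$ in the upper bound) still yields exactly the factor $\tfrac{4}{3}$ after the shift, which the monotonicity-in-$D$ analysis above confirms. As in the original proof, the argument never uses the arc-wise product form of $\pi(P)$, so the bound continues to hold for arbitrary path probabilities.
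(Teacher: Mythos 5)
Your proposal is correct and follows essentially the same route as the paper's proof: the same two bounds on $\optA$ and $\optN$, subtraction of $\SP{c}{s,t}$, reduction to the breakpoint $C - \SP{c}{s,t} + pF = F$, and the final maximization of $\tfrac{1}{1-p+p^2}$ at $p = \tfrac12$. The only cosmetic difference is that you establish the breakpoint by a monotonicity argument in $D = c(P) - \SP{c}{s,t}$ with $p, F$ fixed, whereas the paper perturbs $S$ and $F$ directly; both arguments are equivalent.
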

	
	\begin{proof}
	  Let $P$ be a path that minimizes $\costA$. Observe that
	  \[
		\optA = \costA(P) \geq \pi(P) c(P) + (1 - \pi(P))(\SP{c}{s,t} + F)\label{observation:fmpa}
		\]
	  where the inequality is due to the fact that in any case, the passenger has at least to traverse a shortest path from $s$ to $t$. Furthermore,
	  \[\optN \leq \min \left\{c(P) + (1 - \pi(P))F,\ \SP{c}{s,t} + F \right\}\]
	  as both $P$ and a shortest path from $s$ to $t$ are feasible non-adaptive solutions.
	  Thus,
	  \begin{align*}
	  	\frac{\optN - \SP{c}{s, t}}{\optA - \SP{c}{s, t}} \ & \leq \ \max_{p \in [0, 1], \, F \geq 0 \atop 0 \leq S \leq C} \frac{\min \left\{C + pF,\ S + F \right\} - S}{(1 - p)C + p (S + F) - S} \\
	  	& = \ \max_{p \in [0, 1], \, F \geq 0 \atop 0 \leq S \leq C} \frac{\min \left\{C - S + pF,\ F \right\}}{(1 - p)(C - S) + p F}.
	  \end{align*}
    If $C - S + pF > F$, then increasing $S$ also increases the value of the right-hand side function. If $C - S + pF < F$, then decreasing $F$ increases the value of the right-hand side function. Thus, if the right-hand side is maximized, then $C - S + pF = F$. Therefore
    \begin{align*}
      \frac{\optN - \SP{c}{s,t}}{\optA - \SP{c}{s,t}} \ \leq \ & \max_{p \in [0, 1] \atop F \geq 0} \frac{F}{F (1 - p + p^2)} \ \leq \ \frac{4}{3},
    \end{align*}
    which concludes the proof.\qed
	\end{proof}

\restateLemOptAdap*

\begin{proof}
Let $p \in [0, 1]^E$ be any solution to the $\LOP$ instance. Note that \cref{lem:adaptivity_ratio_leader} implies $\frac{4}{3} \Gamma^{\flex}_{\adap,i}(p) \geq \Gamma^{\flex}_{\nAdap,i}(p) \geq \Gamma^{\flex}_{\adap,i}(p)$ for all $i \in K$, and therefore $\frac{4}{3}\OPTfa \geq \OPTfn \geq \OPTfa$.\qed
\end{proof}

\restateLemSumApprox*

\begin{proof}
	We prove $\pi(P) = \prod_{e \in P} (1 - p_e) \geq 1 - \sum_{e \in P} p_e$ by induction on $|P|$. This is trivial for $|P| = 1$. For $|P| > 1$ observe that for any $e' \in P$ by induction hypothesis
	\[\prod_{e \in P} (1 - p_e) \geq \left(1 - \sum_{\mathrlap{\ e \in P \setminus \{e'\}}}\; p_e \right)(1 - p_{e'}) = 1 - \sum_{e \in P} p_e + p_{e'} \sum_{\mathrlap{\ e \in P \setminus \{e'\}}}\; p_e \geq 1 - \sum_{e \in P} p_e. \]
	This immediately implies the first inequality stated in the lemma.
	
	The second inequality of the lemma is trivially true if $\sum_{e \in P} p_e = 0$. Thus we assume the sum to be strictly positive without loss of generality. Define $\sigma := \min \{\sum_{e \in P} p_e, \ 1\}$ and observe that 
			\[\prod_{e \in P} (1 - p_e) \leq \left(1 - \frac{\sigma}{|P|}\right)^{|P|} \leq \operatorname{e}^{-\sigma}.\]
		Therefore
		\[\frac{1 - \pi(P)}{\sigma} \geq \frac{1 - \operatorname{e}^{-\sigma}}{\sigma} \geq \min_{x \in (0, 1]} \frac{1 - \operatorname{e}^{-x}}{x}.\]
		The right-hand side is decreasing in $x$ and therefore minimized for $x = 1$.\qed
\end{proof}

\restateLemLpUpper*

\begin{proof}
	Let $p \in [0, 1]^E$ be an optimal solution to $\LOPfn$. For every $i \in K$, set $y_i(s_i) = 0$ and $y_i(v) = \SP{c + Fp}{s_i, v}$ for all $v \in V \setminus \{s_i\}$. 
	It is easy to check that $(p, y)$ is a feasible solution to $\LPflex$ and that $y_i(t_i) - y_i(s_i) \geq \SP{c + pF}{s_i, t_i}$ for every $i \in K$. Therefore
	\begin{align*}
	y_i(t_i) - y_i(s_i) - \SP{c}{s_i, t_i} \ \geq \ & \min \left\{\sum_{e \in P} \left(c_e + Fp_e\right) - \SP{c}{s_i, t_i} \; : \; P \in \mathcal{P}_i\right\} \\
	\ \geq \ & \min \left\{\sum_{e \in P} c_e + (1 - \pi(P)) F - \SP{c}{s_i, t_i} \; : \; P \in \mathcal{P}_i\right\}\\
	 \ = \ & \Gamma^{\flex}_{\nAdap,i}(p)
	\end{align*}
	where the second inequality follows from \cref{lem:sum_approximation}. This implies that the optimal value of the LP is at least $\OPTfn$. By \cref{lem:opt_val1,lem:opt_val2}, $\OPTfn$ is as least as large as the optimal solution value of any of the other versions.\qed
\end{proof}

\newpage
\restateThmLpApprox*

\begin{proof}
	Let $(p, y)$ be an optimal solution to $\LPflex$. Note that $y_i(t_i) - y_i(s_i) \leq \SP{c + Fp}{s_i, t_i}$ and define $\lambda_i = y_i(t_i) - y_i(s_i) - \SP{c}{s_i, t_i}$. Then 
	\begin{align*}
  	\lambda_i \ &\leq \ \min \left\{\sum_{e \in P} (c_e + F p_e) - \SP{c}{s_i, t_i}, \ T_i \right\}\\
  	& \leq \ \sum_{e \in P} c_e - \SP{c}{s_i, t_i} + \min \left\{\sum_{e \in P} p_e,\ 1 \right\} \cdot F\\
  	& \leq \  \frac{1}{1 - \operatorname{e}^{-1}} \left(\sum_{e \in P} c_e - \SP{c}{s_i, t_i} + \left(1 - \prod_{e \in P} (1 - p_e)\right) \cdot F\right)
	\end{align*} 
	for every $P \in \mathcal{P}_i$ by \cref{lem:sum_approximation}. Thus, setting the probabilities according to $p$ yields a solution to $\LOPfn$ with profit 
	\[\sum_{i \in K} d_i\Gamma^{\flex}_{\nAdap,i}(p) \geq (1 - 1/\operatorname{e}) \sum_{i \in K} d_i\lambda_i \geq (1 - 1/\operatorname{e}) \OPT^\flex_\nAdap.\hspace{2cm}\qed\]
\end{proof}

\subsubsection{Tightness of the approximation factor.}
	The analysis of the algorithm given in \cref{thm:lopfn-approx} is tight. To see this, consider the following example instance of $\LOPfn$. Let $G$ be a directed cycle of length $n$, i.e., $V = \{v_1, \dots, v_n\}$ and $E = \{e_1, \dots, e_n\}$ with $e_i = (v_i, v_{i+1})$ for $i \in \{1, \dots, n-1\}$ and $e_n = (v_n, v_1)$. Let $K$ consist of $n$ commodities with unit demand, such that $s_i = v_i$ and $t_i = v_{i-1}$ for $i \in \{2, \dots, n\}$ and $s_1 = v_1$ and $t_1 = v_n$. Note $\mathcal{P}_i$ consists of a unique path of length $n-1$ for every $i \in K$. Finally, let $c = 0$, $F = T_i = 1$ for every $i \in K$ and $B = n/(n-1)$. Observe that the optimal solution of $\LPflex$ sets $p_e = 1/(n-1)$ for every $e \in E$. Using these probabilities as a solution to $\LOPfn$ yields a profit of $n \cdot \left(1 - (1 - \frac{1}{n-1})^{n-1}\right)$. On the other hand, setting $p_{e_1} = 1$ and $p_{e_n} = 1/(n-1)$ yields a profit of $n - 1 + 1/(n-1)$. Note that by choosing $n$ sufficiently large the ratio between these two values can be brought arbitrarily close to $1 - 1/\operatorname{e}$.
	
\subsubsection{Optimality gap when using $\LPflex$ for fixed fares.}
Unfortunately, $\LPflex$ does not yield an approximation guarantee for the fixed-fare setting. To see this, consider the following instance of $\LOP^\fixed_{X}$. There are four nodes $s_1, t_1, s_2, t_2$, together with edges from $s_1$ to $s_2$, $s_2$ to $t_2$, and $t_2$ to $t_1$, each with zero cost. In addition, there are $L$ parallel edges from $s_1$ to $t_1$, each with cost $1$. Commodity $1$ has origin $s_1$ and destination $t_1$ with a total demand of $L$. Commodity $2$ has origin $s_2$ and destination $t_2$ with a total demand of $1$. The budget is $1/2 + \varepsilon$ and the fine is $2$. Observe that the optimal solution of the corresponding instance of $\LPflex$ sets $p_{(s_2, t_2)} = 1/2 + \varepsilon$ and $p_e = 0$ for all other edges. Interpreting these probabilities as a solution of $\LOP\fixed_{X}$ results in a profit of $1 + 2\varepsilon$, as the followers represented by commodity $1$ will prefer one of the edges from $s_1$ to $t_1$ over the three-edge path. However, setting $p_{(s_2, t_2)} = 1/2$ instead yields a profit of $L + 1$.

\restateCorLpApprox*

\begin{proof}
	Again, let $(p, y)$ be an optimal solution to $\LPflex$.
	By \cref{lem:adaptivity_ratio_leader} and the proof of \cref{thm:lopfn-approx}, $\Gamma^{\flex}_{\adap,i}(p) \geq \frac{3}{4} \Gamma^{\flex}_{\nAdap, i}(p) \geq \frac{3}{4} (1 - 1/e) (y_i(t_i) - y_i(s_i) - \SP{c}{s_i, t_i})$ for all $i \in K$.\qed
\end{proof}

\section{Appendix for the Computational Study (\cref{section:heuristics})}\label{sec:appendix_c}

\subsection{Instances and Computational Setup} 

\subsubsection{Generation of Planar Graphs.} The graphs were generated using an approach similar to the one in~\cite{denise}. Vertices are distributed uniformly at random in the plane; iteratively, a vertex is chosen uniformly at random and connected to the closest neighbour that can be connected without violating planarity. This is repeated $3|V|-6$ times, after which disjoint connected components are connected using nearest Euclidean neighbours in order to ensure that the entire graph is connected. All arcs are present in both directions, as common in transit networks.
	
\subsubsection{Ticket Prices.}
For the randomly generated instances, ticket prices were set to
\[T_i = b+ m\cdot\frac{\SP{c}{s_i,t_i}}{\max_{v,w\in V}\SP{c}{v,w}},\] where $b$ is a base price and $b+m$ the maximum ticket price allowed in the network. This linear formula is a simplification of the formula used for official regulations regarding ticket prices in public transport networks~\cite{nloverheid,nlsite}. 

\subsubsection{Instance overview.}
Our study consist of seven graph sets, three based on real-world transit networks and four generated using the randomization procedure described above. Each set contains $40$ different combinations of graphs, customers and demands. For each of these combinations, twenty different budget values were tested, leading to $800$ different instances in each graph set; see \cref{table:instancesets} for average graph sizes.

\begin{table}[h]
	\begin{center}
		\begin{tabular}{r@{\hspace{0.2cm}}*{7}{@{\ }c@{\ }}}
		\hline
		 &  nlmajor & nlcomplete & adammetro & small & medium & large & huge \\
		\hline
		 $|V|$ & 23 & 341 & 45 & 25 & 50 & 100 & 200\\
		 $|E|$ & 60 & 864 & 88 & 95 & 195 & 399 & 806 \\
		\hline
		\end{tabular}
	\caption{Average data of test instances\label{table:instancesets}, showing the average  numbers of vertices and edges within each of the seven graph sets. }
	\end{center}\end{table}

\subsubsection{Implementation Details.}
All algorithms have been implemented in Java and compiled using jre7 on Windows 7 Enterprise. Computations have been performed on a machine with Intel Core 2 Duo CPU (GHz, 64 bit) and 4GB of memory using CPLEX 12.4 API for Java for the mathematical programs.

\subsection{Results}

\subsubsection{Computation times.}
\cref{tab:times} shows average computation times for the various algorithms. Using the LP for computing start solutions, a local optimum was reached within less than a minute for most instances, with the only exception being very large graphs combined with the (non-polynomial) non-adaptive followers' response. Furthermore, note that for large instances with many commodities, and for budgets higher than $6$, the mixed integer programming formulation could not be solved in reasonable time.

\begin{table}[h!]
	\begin{center}
		\begin{tabular}{r@{\hspace{0.4cm}}*{9}{c@{\hspace{0.2cm}}}}
		\hline
		graph set &LP&LP$^{\text{LS}}_N$&LP$^{\text{LS}}_A$&LPR&LPR$^{\text{LS}}_N$&LPR$^{\text{LS}}_A$&MIP&MIP$^{\text{LS}}_N$&MIP$^{\text{LS}}_A$  \\
		\hline
		nlmajor &0.17&10.2&1.59&1.39&10.6&1.68&6.31&9.93&1.62 \\
		nlcomplete &7.28&	931&21&	75.4&873&	19.7&	4024&563&	7.57 \\
		adammetro &0.33&	6.12&	1.39&	1.91	&9.81	&2.31&	3.29&	11.2&	2.87\\
		\hline
		small &0.04&0.75&0.11&0.29&0.8&0.11&0.6&0.77&0.11 \\
medium &0.06&4.5&0.18&0.32&4.9&0.19&1.5&3.90&0.18 \\
large &0.15&32.8&0.81&0.99&44.6&0.9&5.1&34.60&0.87 \\
huge &0.60&343.1&6.9&3.9&563.2&7.8&25.5&490.0&7.50 \\
\hline
		\end{tabular}
	\end{center}
	\caption{Average computation time in seconds\label{table:times} for fixed fares. 
	LP, LPR and MIP denote the solutions found by the corresponding mathematical program, and $\text{Alg}^{\text{LS}}_X$ with $\text{Alg} \in \{\text{LP},\text{LPR},\text{MIP}\}$ and $X \in \{\adap, \nAdap\}$ denotes the solution found by performing the local search heuristic on the solution found by algorithm $\text{Alg}$, using the followers response $X$.\vspace*{-1.3cm}}\label{tab:times}
\end{table}

\subsubsection{Influence of Budget.}
The leader's achievable revenue crucially depends on the available budget. \cref{figure:budgets} illustrates the quality of the solutions for the adammetro and nlcomplete instances, as well as for a small and large random instance, all with $50$ commodities for the $\LOPxn$ model, and the obtained revenue.

\begin{figure}
	\centering
	\begin{minipage}[t]{0.46\textwidth}
	\begin{tikzpicture}
	\begin{axis}[title={adammetro instance},height=4cm, width=7.2cm, no markers,
	legend pos=south east, xmax=6, xmin=0, ymin=0.8, xlabel=Budget,
every axis y label/.style={at={(current axis.north west)},above=2mm},
			ylabel=Gap]
	\addplot [color=red, thick][dash pattern=on 1pt off 2pt] table [x=Budget, y=Best, col sep=semicolon] {budgets_adammetro_Gaps.csv};
	\addplot [color=blue, thick][dash pattern=on 3pt off 4pt] table [x=Budget, y=LS+LPflexN, col sep=semicolon] {budgets_adammetro_Gaps.csv};
	\addplot [color=brown] table [x=Budget, y=LS+MCfixN, col sep=semicolon] {budgets_adammetro_Gaps.csv};
	\addplot [color=black, thick] table [x=Budget, y=Response MIPfixN, col sep=semicolon] {budgets_adammetro_Gaps.csv};
	\end{axis}
	\end{tikzpicture}	
\end{minipage}
\hspace{\fill}
\begin{minipage}[t]{0.46\textwidth}
		\begin{tikzpicture}
	\begin{axis}[title={adammetro instance},legend entries={Best,LP$^{LS}$, MC$^{LS}$, MIP},height=4cm, width=7.2cm, no markers, 
	legend pos = south east, legend style={font=\small},
	xmax=6, xmin=0, ymin=0, ylabel near ticks, yticklabel pos=right, xlabel=Budget,legend style={font=\scriptsize},
every axis y label/.style={at={(current axis.north east)},above=3mm},
			ylabel=Profit]
	\addplot [color=red, thick][dash pattern=on 1pt off 2pt]table [x=Budget, y=Best, col sep=semicolon] {budgets_adammetro.csv};
	\addplot [color=blue, thick][dash pattern=on 3pt off 4pt] table [x=Budget, y=LS+LPflexN, col sep=semicolon] {budgets_adammetro.csv};
	\addplot [color=brown]table [x=Budget, y=LS+MCfixN, col sep=semicolon] {budgets_adammetro.csv};
	\addplot [color=black, thick] table [x=Budget, y=Response MIPfixN, col sep=semicolon] {budgets_adammetro.csv};
	\end{axis}
	\end{tikzpicture}
\end{minipage}

\vspace{2mm}

\begin{minipage}[t]{0.46\textwidth}
	\begin{tikzpicture}
	\begin{axis}[title={nlcomplete instance},height=4cm, width=7.2cm, no markers,
	legend pos=south east, xmax=6, xmin=0, ymin=0.6,ymax=1, xlabel=Budget,
every axis y label/.style={at={(current axis.north west)},above=2mm},
			ylabel=Gap]
	\addplot [color=red, thick][dash pattern=on 1pt off 2pt] table [x=Budget, y=Best, col sep=semicolon] {budgets_nlcomplete_Gaps.csv};
	\addplot [color=blue, thick][dash pattern=on 3pt off 4pt] table [x=Budget, y=LS+LPflexN, col sep=semicolon] {budgets_nlcomplete_Gaps.csv};
	\addplot [color=brown] table [x=Budget, y=LS+MCfixN, col sep=semicolon] {budgets_nlcomplete_Gaps.csv};
	\addplot [color=black, thick] table [x=Budget, y=Response MIPfixN, col sep=semicolon] {budgets_nlcomplete_Gaps.csv};
	\end{axis}
	\end{tikzpicture}	
\end{minipage}
\hspace{\fill}
\begin{minipage}[t]{0.46\textwidth}
		\begin{tikzpicture}
	\begin{axis}[title={nlcomplete instance},height=4cm, width=7.2cm, no markers,
	legend pos = south east,
	xmax=6, xmin=0, ymin=0,ylabel near ticks, yticklabel pos=right,
every axis y label/.style={at={(current axis.north east)},above=3mm}, xlabel=Budget,
			ylabel=Profit]
	\addplot [color=red, thick][dash pattern=on 1pt off 2pt]table [x=Budget, y=Best, col sep=semicolon] {budgets_nlcomplete.csv};
	\addplot [color=blue, thick][dash pattern=on 3pt off 4pt] table [x=Budget, y=LS+LPflexN, col sep=semicolon] {budgets_nlcomplete.csv};
	\addplot [color=brown]table [x=Budget, y=LS+MCfixN, col sep=semicolon] {budgets_nlcomplete.csv};
	\addplot [color=black, thick] table [x=Budget, y=Response MIPfixN, col sep=semicolon] {budgets_nlcomplete.csv};
	\end{axis}
	\end{tikzpicture}
\end{minipage}
	
\vspace{2mm}

\begin{minipage}[t]{0.46\textwidth}
	\begin{tikzpicture}
	\begin{axis}[title={random instance $|V|=25$},height=4cm, width=7.2cm, no markers,
	legend pos=south east, xmax=6, xmin=0, ymin=0.8,ymax=1, xlabel=Budget,
every axis y label/.style={at={(current axis.north west)},above=2mm},
			ylabel=Gap]
	\addplot [color=red, thick][dash pattern=on 1pt off 2pt] table [x=Budget, y=Best, col sep=semicolon] {budgets_RndSmall_Gaps.csv};
	\addplot [color=blue, thick][dash pattern=on 3pt off 4pt] table [x=Budget, y=LS+LPflexN, col sep=semicolon] {budgets_RndSmall_Gaps.csv};
	\addplot [color=brown] table [x=Budget, y=LS+MCfixN, col sep=semicolon] {budgets_RndSmall_Gaps.csv};
	\addplot [color=black, thick] table [x=Budget, y=Response MIPfixN, col sep=semicolon] {budgets_RndSmall_Gaps.csv};
	\end{axis}
	\end{tikzpicture}	
\end{minipage}
\hspace{\fill}
\begin{minipage}[t]{0.46\textwidth}
		\begin{tikzpicture}
	\begin{axis}[title={random instance $|V|=25$},height=4cm, width=7.2cm, no markers,
	legend pos = south east,
	xmax=6, xmin=0, ymin=0,ylabel near ticks, yticklabel pos=right, xlabel=Budget,
every axis y label/.style={at={(current axis.north east)},above=3mm},
		ylabel=Profit]
	\addplot [color=red, thick][dash pattern=on 1pt off 2pt]table [x=Budget, y=Best, col sep=semicolon] {budgets_RndSmall.csv};
	\addplot [color=blue, thick][dash pattern=on 3pt off 4pt] table [x=Budget, y=LS+LPflexN, col sep=semicolon] {budgets_RndSmall.csv};
	\addplot [color=brown]table [x=Budget, y=LS+MCfixN, col sep=semicolon] {budgets_RndSmall.csv};
	\addplot [color=black, thick] table [x=Budget, y=Response MIPfixN, col sep=semicolon] {budgets_RndSmall.csv};
	\end{axis}
	\end{tikzpicture}
\end{minipage}
		
\vspace{2mm}

\begin{minipage}[t]{0.46\textwidth}
	\begin{tikzpicture}
	\begin{axis}[title={random instance $|V|=100$},height=4cm, width=7.2cm, no markers,
	legend pos=south east, xmax=6, xmin=0, ymin=0.55, ymax=1, xlabel=Budget,
every axis y label/.style={at={(current axis.north west)},above=2mm},
			ylabel=Gap]
	\addplot [color=red, thick][dash pattern=on 1pt off 2pt] table [x=Budget, y=Best, col sep=semicolon] {budgets_RndLarge_Gaps.csv};
	\addplot [color=blue, thick][dash pattern=on 3pt off 4pt] table [x=Budget, y=LS+LPflexN, col sep=semicolon] {budgets_RndLarge_Gaps.csv};
	\addplot [color=brown] table [x=Budget, y=LS+MCfixN, col sep=semicolon] {budgets_RndLarge_Gaps.csv};
	\addplot [color=black, thick] table [x=Budget, y=Response MIPfixN, col sep=semicolon] {budgets_RndLarge_Gaps.csv};
	\end{axis}
	\end{tikzpicture}	
\end{minipage}
\hspace{\fill}
\begin{minipage}[t]{0.46\textwidth}
		\begin{tikzpicture}
	\begin{axis}[title={random instance $|V|=100$},height=4cm, width=7.2cm, no markers,
	legend pos = south east,
	xmax=6, xmin=0, ymin=0,ylabel near ticks, yticklabel pos=right, xlabel=Budget,
every axis y label/.style={at={(current axis.north east)},above=3mm},
		ylabel=Profit]
	\addplot [color=red, thick][dash pattern=on 1pt off 2pt]table [x=Budget, y=Best, col sep=semicolon] {budgets_RndLarge.csv};
	\addplot [color=blue, thick][dash pattern=on 3pt off 4pt] table [x=Budget, y=LS+LPflexN, col sep=semicolon] {budgets_RndLarge.csv};
	\addplot [color=brown]table [x=Budget, y=LS+MCfixN, col sep=semicolon] {budgets_RndLarge.csv};
	\addplot [color=black, thick] table [x=Budget, y=Response MIPfixN, col sep=semicolon] {budgets_RndLarge.csv};
	\end{axis}
	\end{tikzpicture}
\end{minipage}	
	\caption{Graphs illustrating the influence of the budget on the solution quality, for the adammetro instance, the nlcomplete instance, a random instance with $25$ vertices and a random instance with $100$ vertices, each with $50$ commodities for the $\LOPxn$ model. Note that the legend in the top-right graph applies to all graphs. Graphs on the left show gaps relative to the upper bound, graphs on the right show the corresponding profit. 'Best' denotes the best solution found, $\text{LP}^{\text{LS}}$ and $\text{MC}^{\text{LS}}$ denote the solutions found by performing the local search heuristic on the LP solution resp. MultiCut solution, and MIP denotes the solution found by the corresponding mathematical program from~\cite{borndorfer}.}\label{figure:budgets}

\end{figure}
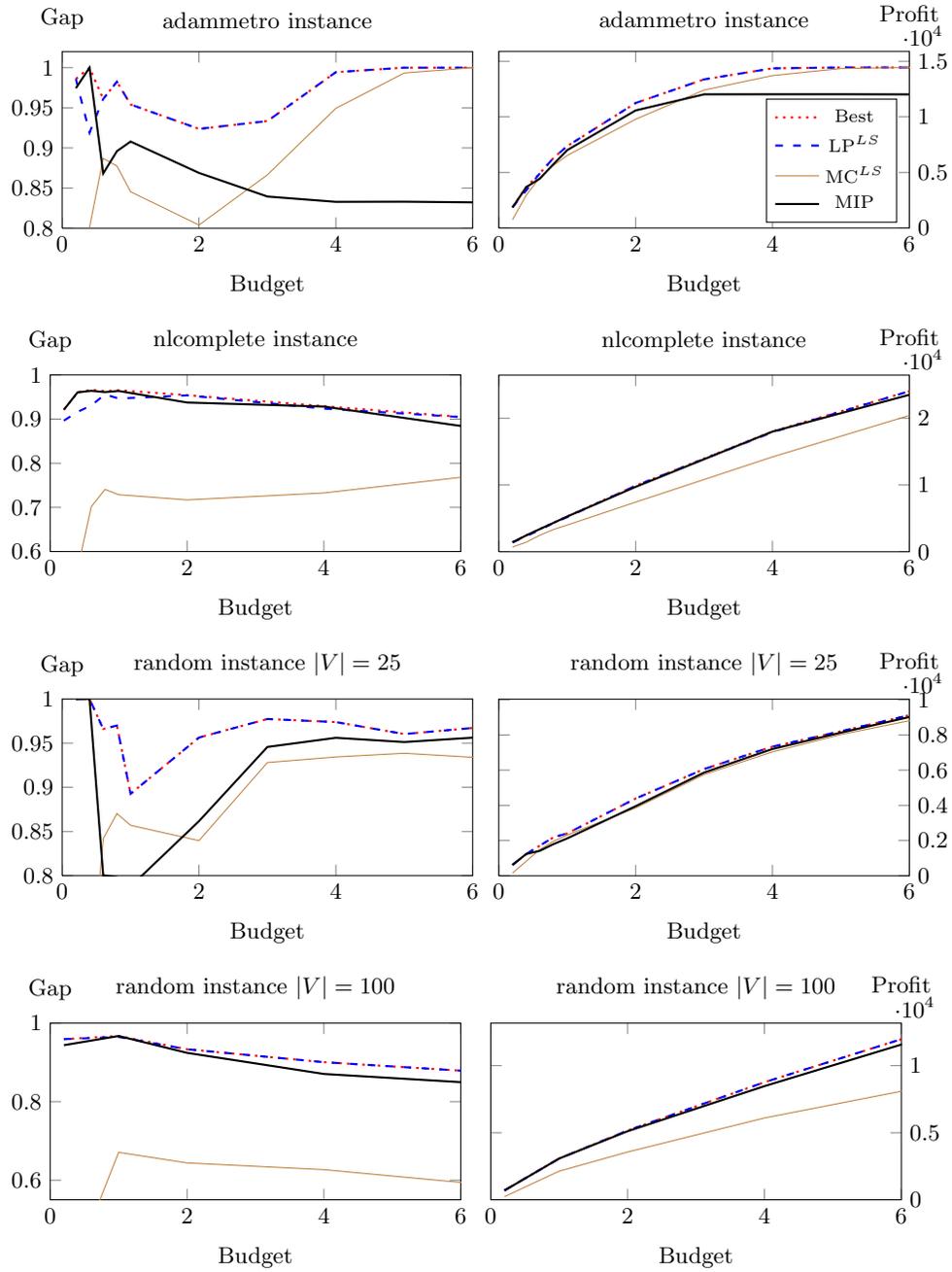

\subsubsection{Complete overview of solution quality.}
\cref{table:results1,table:results2,table:results3} show average ratios between solutions and upper bounds for all models. Here, \textbf{LPR} denotes the linear relaxation of the mixed integer program from~\cite{borndorfer} (\textbf{MIP}).

\begin{table}[t]
	\begin{center}
		\begin{tabular}{r@{\hspace{0.5cm}}*{4}{c@{\hspace{0.2cm}}}@{\hspace{0.2cm}}*{4}{c@{\hspace{0.2cm}}}}
		\hline
		&\multicolumn{4}{c}{$\LOPfn$}&\multicolumn{4}{c}{$\LOPfa$} \\
		graph set &LP&LP$^{\text{LS}}$&MC&MC$^{\text{LS}}$ &LP&LP$^{\text{LS}}$&MC&MC$^{\text{LS}}$  \\
		\hline
		nlmajor &96.4	&98.0&	52.8	&85.2	&96.4	&98.0&	52.8&	85.2 \\
		nlcomplete &97.8	&97.9	&	33.9	&39.7	 &	97.7&	97.9&	33.9	&39.7\\
		adammetro &94.7&	97.8	&69.0&	86.7 &94.7	&97.8	&69.0&	86.7\\
		\hline
		small &95.3&97.5&49.1&77.7 &95.3&97.4&49.0&77.5 \\
medium &95.5&97.5 &41.6&69.4 &95.4&97.3&41.6&69.3 \\
large &96.6&98.0 &40.7&75.9& 96.5&97.8&40.7&75.8 \\
huge &96.5&97.9 &37.2&71.1 &96.4&97.7&37.2&71.2 \\
  \hline
		\end{tabular}
	\end{center}
	\caption{Average ratios between solutions and upper bounds in percent for test instances in the flexible-fare setting. $\text{LP}^{\text{LS}}$ and $\text{MC}^{\text{LS}}$ denote the solutions found by performing the local search heuristic on the LP solution and the MultiCut solution, respectively.\label{table:results1}}
\end{table}

\begin{table}[t]
	\begin{center}
		\begin{tabular}{r@{\hspace{0.4cm}}*{8}{c}}
		\hline
		graph set &LP&LP$^{\text{LS}}$&LPR&LPR$^{\text{LS}}$&MIP&MIP$^{\text{LS}}$&MC&MC$^{\text{LS}}$  \\
		\hline
		nlmajor &94.1	&97.2&	94.8&	97.2	&94.7&	97.3&	52.2	&85.1\\
		nlcomplete &90.9	&92.6	&90.6&	92.4&	91.0&	93.3&	33.3	&37.2\\
		adammetro &94.7	&97.8	&92.0	&97.5	&91.7&	97.3	&69.0&86.7\\
		\hline
		small &88.1&96.0&89.9&96.3&89.8&96.3&48.7&74.1 \\
medium &88.9&95.0&90.1&95.0&90.5&95.1&41.1&64.9 \\
large &88.6&94.5&89.0&94.9&89.0&95.1&40.3&71.6 \\
huge &85.2&92.7&87.8&93.3&88.1&94.0&36.8&64.8 \\
\hline
		\end{tabular}
	\end{center}
	\caption{Average ratios between solutions and upper bounds in percent for test instances with fixed ticket prices and adaptive followers. LP, LPR and MIP denote the solutions found by the corresponding mathematical program, and $\text{Alg}^{\text{LS}}$ with $\text{Alg} \in \{\text{LP},\text{LPR},\text{MIP}\}$ denotes the solution found by performing the local search heuristic on the solution found by algorithm $\text{Alg}$.\label{table:results2}}
\end{table}

\newpage

\begin{table}[t!]
	\begin{center}
		\begin{tabular}{r@{\hspace{0.4cm}}*{8}{c}}
		\hline
		graph set &LP&LP$^{\text{LS}}$ &LPR&LPR$^{\text{LS}}$&MIP&MIP$^{\text{LS}}$ &MC&MC$^{\text{LS}}$  \\
		\hline
		nlmajor &94.8&	97.3	&95.1	&97.3	&95.4	&97.4&	52.2	&85.1\\
		nlcomplete & 91.6&	93.2 &	91.9&	93.6	&93.4	&94.4 &33.3	&37.1\\
		adammetro & 94.7	&97.8	 &92.0&	97.5 &91.7&	97.3 &69.0&	86.7\\
		\hline
		small &92.1&96.3 &92.7&96.5 &92.8&96.4 &48.7&74.2 \\
medium &90.7&95.2 &91.8&95.4 &91.9&95.6& 41.1&64.9\\
large &90.4&95.1 &90.8&95.2 &91.3&95.4& 40.3&71.7\\
huge &89.1&94.1& 91.5&94.5 &91.9&95.3& 36.8&64.8\\
\hline
		\end{tabular}
	\end{center}
	\caption{Average ratios between solutions and upper bounds in percent for test instances with fixed ticket prices and non-adaptive followers. LP, LPR and MIP denote the solutions found by the corresponding mathematical program, and $\text{Alg}^{\text{LS}}$ with $\text{Alg} \in \{\text{LP},\text{LPR},\text{MIP}\}$ denotes the solution found by performing the local search heuristic on the solution found by algorithm $\text{Alg}$.\label{table:results3}}
\end{table}	
                             
\end{document}